\newcommand{\lref}[2][]{\hyperref[#2]{#1~\ref*{#2}}}
\newtheorem{definition}{Definition}
\newtheorem{problem}{Problem}
\newtheorem{theorem}{Theorem}
\newtheorem{proposition}[theorem]{Proposition}
\newtheorem{lemma}[theorem]{Lemma}
\newtheorem{corollary}[theorem]{Corollary}
\newtheorem{observation}[theorem]{Observation}
\newcommand{\classP}{{\sf P}}
\newcommand{\classNP}{{\sf NP}}
\newcommand{\classZPP}{{\sf ZPP}}
\newcommand{\NN}{\ensuremath{\mathbb{N}}}
\newcommand{\RR}{\ensuremath{\mathbb{R}}}
\renewcommand{\O}{{\mathcal O}}
\newcommand{\inductiveindependence}[1]{inductive independence number#1}
\renewcommand{\Pr}[1]{\mbox{\rm\bf Pr}\left[#1\right]}
\newcommand{\Ex}[1]{\mbox{\rm\bf E}\left[#1\right]}
\author{Martin Hoefer \and Thomas Kesselheim \and Berthold V\"ocking}
\title{Approximation Algorithms for Secondary Spectrum Auctions%
  \thanks{Department of Computer Science, RWTH Aachen University,
    Germany. \texttt{$\{$mhoefer,kesselheim,voecking$\}$@cs.rwth-aachen.de}. This
    work has been supported by DFG through UMIC Research Centre, RWTH
    Aachen University, and grant Ho 3831/3-1.}}
\date{}
\begin{document}

\maketitle
\begin{abstract}
  We study combinatorial auctions for the secondary spectrum
  market. In this market, short-term licenses shall be given to
  wireless nodes for communication in their local neighborhood. In
  contrast to the primary market, channels can be assigned to multiple
  bidders, provided that the corresponding devices are well separated
  such that the interference is sufficiently low. Interference
  conflicts are described in terms of a conflict graph in which the
  nodes represent the bidders and the edges represent conflicts such
  that the feasible allocations for a channel correspond to the
  independent sets in the conflict graph.

  In this paper, we suggest a novel LP formulation for combinatorial
  auctions with conflict graph using a non-standard graph parameter,
  the so-called \emph{\inductiveindependence{}}. Taking into account
  this parameter enables us to bypass the well-known lower bound of
  $\Omega(n^{1-\varepsilon})$ on the approximability of independent
  set in general graphs with $n$ nodes (bidders). We achieve
  significantly better approximation results by showing that
  interference constraints for wireless networks yield conflict graphs
  with bounded \inductiveindependence{}.

  Our framework covers various established models of wireless
  communication, e.g., the protocol or the physical model. For the
  protocol model, we achieve an $O(\sqrt{k})$-approximation, where $k$
  is the number of available channels. For the more realistic physical
  model, we achieve an $O(\sqrt{k} \log^2 n)$ approximation based on
  edge-weighted conflict graphs. Combining our approach with the
  LP-based framework of Lavi and Swamy, we obtain incentive compatible
  mechanisms for general bidders with arbitrary valuations on bundles
  of channels specified in terms of demand oracles.
\end{abstract}

\section{Introduction}
A major challenge of today's wireless networks and mobile
communication is spectrum management, as devices use common frequency
bands that are subject to interference between multiple transmitters
in the same area. In fact, spectrum allocation has become one of the
key problems that currently limits the growth and evolution of
wireless networks. The reason is that, traditionally, frequencies were
given away to large service providers in a static way by re\-gu\-la\-tors
for entire countries. Examples include FCC auctions in the US or the
auctions for UMTS and LTE that took place in Europe. However, demands
for services vary at different times and in different areas. Depending
on time and place this causes frequency bands licensed for one
application to become overloaded. On the other hand, different bands
are idle at the same time. A promising solution to this problem is to
use market approaches that result in a flexible and thus more
efficient redistribution of access rights -- thereby overcoming the
artificial shortage of available spectrum. In this case, parts of the
spectrum that are currently unused by so-called \emph{primary users}
for the originally intended purpose (such as TV or telecommunication)
can be offered to so-called \emph{secondary users}. Licenses for such
secondary usage are valid only for a local area.

A sustainable approach (concisely termed ``eBay in the Sky''
in~\cite{Zhou2008}) to automatically run such a secondary spectrum
market is to auction licenses for secondary users on a regular
basis. In this paper, we propose a general framework and efficient
algorithms to implement such a secondary spectrum auction. In our
model, there are $n$ secondary users who can bid for bundles of the
$k$ wireless channels. Depending on the scenario a user can correspond
to a base station that strives to cover a specific area or a pair of
devices that want to exchange data (e.g., a base station and a mobile
device). In order to account for channel aggregation capabilities of
modern devices, users should be able to acquire multiple channels.  We
allow each user $v$ to have an arbitrary valuation $b_{v,T}$ for each
subset $T$ of channels. This level of generality is necessary because
of different needs, applications, and hardware abilities of the users,
but also because of different locations, spectrum availability, and
interference conditions. For instance, the presence of a primary user
might allow access to a channel only for a subset of mobile devices
located in selected areas. We assume no restrictions on the valuation
functions, not even monotonicity.

In this paper, we devise approximation algorithms for spectrum
allocation on the secondary market with the objective of maximizing
social welfare. We focus on the underlying combinatorial problems and
describe interference conflicts by an edge-weighted conflict graph. In
unweighted graphs, the vertices represent the bidders and the edges
represent conflicts such that the feasible allocations for a channel
correspond to the independent sets in the conflict graph. For
edge-weighted graphs, we extend the definition of independent set to
weighted edges by requiring the sum of all incoming weights to be less than $1$.
We address the following problem.
\begin{problem}[Combinatorial Auction with Conflict Graph]\label{def:problem}
 Given a graph $G = (V, E)$, a
  natural number $k$, and a valuation function $b\colon V \times
  2^{[k]} \to \NN$, find a feasible allocation $S\colon V \to 2^{[k]}$
  that maximizes the social welfare $b(S) := \sum_{v \in V} b_{v,
    S(v)}$. 

  An allocation $S$ is called feasible if for all channels $j \in
  [k]$, the set of vertices that are assigned to this channel,
  i.\,e. $\{ v \in V \mid j \in S(v) \}$, is an independent set.
\end{problem}
Observe that this problem generalizes combinatorial auctions (where
the conflict graph is a clique) and maximum weight independent set
(where $k=1$). This formulation covers a large number of binary
interference models (such as the protocol model). As we will see, edge
weights allow to express even more realistic models like the physical
model. Here, we can even take the effects of power control into
account.

\subsection{Our contribution}

We devise the first approximation algorithms for the combinatorial
auction problem with conflict graph. Our approach is based on a novel
LP formulation for the independent set problem 
using a non-standard graph parameter.

\begin{definition}[\inductiveindependence{} $\rho$]
\label{def:IIN}
For a graph $G=(V,E)$, the \inductiveindependence{} $\rho$ is the
smallest number such that there is an ordering $\pi$ of the vertices
satisfying: For all $v \in V$ and all independent sets $M \subseteq
V$, we have $\left\lvert M \cap \left\{ u \in V \mid \{u, v\} \in E,
    \pi(u) < \pi(v) \right\} \right\rvert \leq \rho$.
\end{definition}

In words, for every vertex $v \in V$, the size of an independent set
in the {\em backward neighborhood} of $v$, i.e., the set of neighbors
$u$ of $v$ with $\pi(u) < \pi(v)$, is at most $\rho$. Conflict graphs
derived from various simple models of wireless communication with
binary conflicts like, e.g., the protocol model, distance-2 matchings,
or disk graphs, have $\rho = O(1)$, see, e.g.,~\cite{Wan2009}. The
corresponding ordering $\pi$ is efficiently computable in these cases.
We exploit this property in our algorithms.

Our main results concern the so-called physical model which is common
in the engineering community and was only recently subject to
theoretical work. In binary models of wireless communication
usually studied in theoretical computer science, we make the
oversimplifying assumption that interference caused by a signal stops
at some boundary around the sender, and receivers beyond this boundary
are not disturbed by this signal. In contrast, the physical model takes into
account realistic propagation effects and additivity of
signals. Feasibility of simultaneous transmissions is modeled in terms
of signal to interference plus noise ratio (SINR) constraints. We
study two variants of this model, one in which signals are sent at
given powers (e.g., uniform) and one where the powers are subject to
optimization themselves. We show how to represent SINR constraints for
both of these variants in terms of an edge-weighted conflict graph and
introduce appropriate notions of ``independent set'' and ``inductive
independence number'' for edge-weighted graphs. Note that the
combinatorial auctions with edge-weighted conflict graphs can be
defined in the same way as stated in Problem~\ref{def:problem} given
an appropriate definition of ``independent set''.

At first, we prove that the inductive independence number $\rho$ for
edge-weighted graphs obtained from the physical model (in both
variants) is bounded by $O(\log n)$ and the corresponding ordering is
efficiently computable. This enables us to bypass the well-known lower
bound of $\Omega(n^{1-\varepsilon})$ on the approximability of
independent set in general graphs. In particular, we present an LP
relaxation capturing both interference constraints and valuations of
users for subsets of channels. Similar to regular combinatorial
auctions, the LP might require an exponential number of valuations
$b_{v,T}$ to be written down explicitly. However, we show how to solve
the LP using only oracle access to bidder valuations. Our LP based
framework is able to handle edge-weighted conflict graphs resulting
from the physical model. By rounding the LP optimum, our algorithm
achieves an $O(\rho \cdot \sqrt{k} \log n)$ approximation
guarantee. Combining this with the bound on $\rho$ gives an
$O(\sqrt{k} \log^2 n)$-approximation of the social welfare for
spectrum auctions in the physical model (in both variants).

For more simple binary models of wireless communication such as the
protocol model, our approach yields an $\O(\rho \cdot
\sqrt{k})$-approximation. Using the bounds on $\rho$ mentioned above,
this yields an $\O(\sqrt{k})$ approximation guarantee. In this case,
we also provide some complementing hardness results. In general, it is
hard to approximate the combinatorial auction problem with conflict
graphs to a factor of $\O(\rho^{1-\epsilon})$ and to a factor of
$O(k^{\frac 12 - \epsilon})$ for any constant $\epsilon > 0$. While
for some specific models better approximations exist, in general the
bounds provided by our algorithms for binary models cannot be
improved in terms of a single parameter $\rho$ or $k$. In addition,
we provide stronger lower bounds for the case
of asymmetric channels, in which the conflict graph can be different
for each channel. In this case, our algorithm guarantees a factor of
$O(\rho \cdot k)$, which is best possible in general. 

Our approach can be used to derive incentive compatible mechanisms
using the LP-based framework of Lavi and Swamy~\cite{Lavi2005} for
general bidders with demand oracles. In fact, we slightly extend this
framework by starting with an infeasible rather than feasible ILP
formulation. The approximation algorithm computes a linear combination
of feasible solutions approximating the optimal solution of the
corresponding LP and then chooses one of these solutions at
random. The obtained mechanism is truthful in expectation. \\[0.2cm]
{\noindent \bf Outline.} For technical reasons, we present our results in
a different order than stated above. We first introduce the basic
approach in the context of unweighted conflict graphs in
Section~\ref{sec:unweighted}. The extensions to edge-weighted graphs
including formal definitions of independent sets and inductive
independence number are given in Section~\ref{sec:weighted}. The
aforementioned wireless models (especially the variants of the
physical model) are formally introduced in Section~\ref{sec:models},
where we also show the bounds on the inductive independence
number. The application of the framework by Lavi and Swamy is
discussed in Section~\ref{sec:mechanism}. Finally, the results on asymmetric
channels are presented in Section~\ref{sec:asymmetric}.

\subsection{Related Work}
The idea of establishing secondary spectrum markets has attracted much
attention among researchers in applied networking and engineering
communities~\cite{Zhou2008,Ghandhi2007,Buddhikot2005,Ileri2005}.
There are many different fundamental regulatory questions that need to
be addressed when implementing such a market. For example it has to be
clarified who runs the market and who is allowed to sell and buy
spectrum there. Possible actors could be network providers, brokers,
regulators and end-users. In addition, it has to be guaranteed that
existing services are not harmed. In most of the literature on
spectrum markets the technological aspects dominate.  Many results in
this area are only of qualitative nature, only a few examples (such
as~\cite{Zhou2008,Zhou2009}) do explicitly consider truthfulness. We
believe that our combinatorial models based on (edge-weighted)
conflict graphs taking into account the bounded inductive independence
number allows us to neglect technological aspects and to focus on the
underlying combinatorial and algorithmic questions. To the best of our
knowledge there is no previous work on auctions using the general
framework of conflict graphs, or, in general, non-trivial provable
worst-case guarantees on the efficiency of the allocation.

In contrast, combinatorial auctions have been a prominent research
area in algorithmic game theory over the last decade. A variety of
works treats auctions with special valuation functions, such as
submodular valuations or ones expressible by specific bidding
languages. For an introduction see, e.g.,~\cite[Chapters 11 and
12]{Nisan07} or~\cite{Cramton06}. In addition, designing
(non-truthful) approximation algorithms for the allocation problems
has found interest, most notably for submodular valuations
(e.g.,~\cite{Vondrak2008,Feige2006}). More relevant to our work,
however, are results that deal with truthful mechanisms for general
valuations. Most notably, Lavi and Swamy~\cite{Lavi2005} and Dobzinski
et al.~\cite{Dobzinski2006} derive mechanisms using only demand
oracles that achieve an $\sqrt{k}$-approximation with truthfulness in
expectation and universal truthfulness, respectively. A deterministic
truthful $(k/\sqrt{\log k})$-approximation is obtained by Holzman et
al.~\cite{Holzman2004}.

Over the last decades, there has been much research on finding maximum
independent sets in the context of interference models for wireless
networks. One of the simplest models in this area are disk graphs,
which are mostly analyzed using geometric
arguments. See~\cite{Fishkin2003,Graf1994} for a summary on the
results and typical techniques. Recently and independently from our
work, Christodoulou et al.~\cite{CEF10} study combinatorial auctions
for geometric objects. Similar to our approach, they present an LP
formulation based on a property in terms of an ordering, the fatness
of geometric objects.

Akcoglu et al.~\cite{Akcoglu2000} and Ye and Borodin~\cite{Ye2009}
also use the inductive independence number to approximate independent
sets within a factor of $\rho$ with a motivation stemming from chordal graphs.
However, they do not consider multiple channels or wireless communication. 
As the algorithm is not monotone, it is also not immediately applicable 
for truthful auctions.

Algorithmic aspects of the physical model have become popular in
theoretical research recently, particularly the problem of scheduling,
i.e., partitioning a given set of requests in a small number of
classes such that all requests are successful. New challenges arise
since graph-theoretic coloring methods cannot be directly applied. For
example, there have been a number of results on how to choose powers
for short schedule lengths~\cite{Fanghaenel2009, Fanghaenel2009a,
  Halldorsson2009}. A popular method is fixing powers according to
some distance-based scheme. For uniform power assignments, a
constant-factor approximation algorithm for the problem of finding an
independent set (i.e., a maximum set that may share a single channel)
is presented in~\cite{Goussevskaia2009}. An online version of the
problem has been studied in~\cite{Fanghaenel2010} presenting tight
bounds depending on the difference in lengths of the requests. Most
recently, a constant-factor approximation algorithm for arbitrary
power schemes has been obtained by Kesselheim~\cite{Kesselheim2010a}.

\section{Unweighted Conflict Graphs}
\label{sec:unweighted}

\subsection{Our LP relaxation}
\label{sec:lp}

One can get a very intuitive LP formulation for the Weighted
Independent Set problem by leaving out the integer constraints from
the Integer Linear Programm formulation.

\begin{align*}
 \text{Max. } & \sum_{v \in V} b_v x_v\\
 \text{s.\,t. } & x_u + x_v \leq 1 && \text{for all $\{ u, v \} \in E$} \\
 & 0 \leq x_v \leq 1 && \text{for all $v \in V$}
\end{align*}

This LP can be used to approximate Independent Set within a factor of
$(\bar{d} + 1)/2$~\cite{Hochbaum1983,Kako2005} where $\bar{d}$ is the
average vertex degree. However, even for the case of a clique the integrality gap is $\nicefrac{n}{2}$.

In contrast to this edge-based LP formulation, we here present
a different LP based on the \emph{inductive indepence number} $\rho$
(recall Definition~\ref{def:IIN}). As we will see later, in typical
conflict graphs the \inductiveindependence{} is constant and the
corresponding ordering $\pi$ can be efficiently calculated. Here we
use $\Gamma_\pi(v) = \{ u \in V \mid \{u,v\} \in E, \pi(u) < \pi(v)\}$
to denote the backward neighborhood of $v$. This allows to use the
following LP relaxation that has one constraint for each combination of
a vertex and a channel and another one for each vertex.
\begin{subequations}
  \label{eq:lp}
  \begin{align}
   \text{Max. } & \sum_{v \in V} \sum_{T \subseteq [k]} b_{v,T} x_{v,T}\\
    \text{s.\,t. } & \sum_{u \in \Gamma_\pi(v)} \sum_{\substack{T \subseteq [k] \\ j \in T}} x_{u,T} \leq \rho && \text{for all $v \in V$, $j \in [k]$} \label{eq:lp:niceproperty}\\
    & \sum_{T \subseteq [k]} x_{v, T} \leq 1 && \text{for all $v \in V$} \label{eq:lp:sumofsets}\\
    & x_{v, T} \geq 0 && \text{for all $v \in V$, $T \subseteq
      [k]$} \label{eq:lp:nonnegative}
  \end{align}
\end{subequations}
This LP works as follows. For each vertex $v$ and each possible set $T
\subseteq [k]$ of channels assigned to this vertex, there is one
variable $x_{v, T}$. Due to the bounded \inductiveindependence{} all
feasible allocations correspond to solutions of the LP. However, not
all integer solutions of the LP necessarily correspond to feasible
channel allocations. Nevertheless, we will show how to compute a
feasible allocation from each solution.

\begin{lemma}
  \label{lem:feasible}
  Let $S$ be a feasible allocation and $x$ be defined by $x_{v, T} =
  1$ if $S(v) = T$ and $0$ otherwise, then $x$ is a feasible LP
  solution.
\end{lemma}

\begin{proof}
  Conditions~\eqref{eq:lp:sumofsets} and~\eqref{eq:lp:nonnegative} are
  obviously satisfied. Let us now consider
  Condition~\eqref{eq:lp:niceproperty} for some fixed $v \in V$, $j
  \in [k]$. Set $M:=\{ u \in V \mid \pi(u) < \pi(v), j \in S(u)
  \}$. Since $M$ is an independent set, by definition of the
  \inductiveindependence, we have $\lvert M \cap \Gamma_\pi(v) \rvert \leq \rho$.

  On the other hand, we have
  \[
  \sum_{\substack{u \in \Gamma_\pi(v)}} \sum_{\substack{T
      \subseteq [k] \\ j \in T}} x_{u,T} = \lvert M \cap \Gamma_\pi(v)\rvert \leq \rho\enspace.
  \]

  So $x$ is a feasible LP solution. 
\end{proof}

As all coefficients are non-negative, this LP has a packing
structure. In particular, we can observe the following decomposition
property.
\begin{observation}
  Let $x$ be a feasible solution to the LP, and $x^{(1)}$ be a vector
  such that $0 \leq x^{(1)}_{v, T} \leq x_{v,T}$ for all $v \in V$, $T
  \subseteq [k]$. Then $x^{(1)}$ and $x^{(2)} := x - x^{(1)}$ are
  feasible LP solutions as well.
\end{observation}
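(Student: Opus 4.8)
The plan is to verify that the two vectors $x^{(1)}$ and $x^{(2)}$ each satisfy all three families of LP constraints. This is a purely routine check, so the main ``obstacle'' is really just to confirm that the packing (non-negativity of all coefficients) structure makes each constraint decompose additively.

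First I would handle $x^{(1)}$. Non-negativity~\eqref{eq:lp:nonnegative} is immediate from the hypothesis $0 \leq x^{(1)}_{v,T}$. For~\eqref{eq:lp:sumofsets} and~\eqref{eq:lp:niceproperty}, I would use the coordinatewise bound $x^{(1)}_{v,T} \leq x_{v,T}$ together with the fact that every coefficient appearing in these constraints equals $1$ (hence is non-negative): summing a pointwise-smaller non-negative vector against non-negative coefficients can only decrease the left-hand side, so each sum for $x^{(1)}$ is at most the corresponding sum for $x$, which is at most $\rho$ (resp.\ $1$) by feasibility of $x$. Thus $x^{(1)}$ is feasible.

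Next I would handle $x^{(2)} = x - x^{(1)}$. Here the key sub-step is non-negativity: for each $v, T$ we have $x^{(2)}_{v,T} = x_{v,T} - x^{(1)}_{v,T} \geq 0$ precisely because the hypothesis gives $x^{(1)}_{v,T} \leq x_{v,T}$. Once non-negativity is established, the remaining constraints follow from linearity: the left-hand side of each constraint evaluated at $x^{(2)}$ is the difference of the two (finite) sums evaluated at $x$ and at $x^{(1)}$ respectively. Since both $x$ and $x^{(1)}$ satisfy the constraint and $x^{(1)}$ is itself non-negative, the difference is again bounded by $\rho$ (resp.\ $1$); in fact it is bounded above by the value at $x$, which suffices.

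I do not anticipate any real difficulty, as the statement is essentially the observation that a packing polytope is closed under ``splitting'' a point into a non-negative part and its non-negative complement. The one place to be slightly careful is to keep the roles of $x^{(1)}$ and $x^{(2)}$ separate when invoking the bound $x^{(1)} \leq x$ coordinatewise: it is used to shrink the sums for $x^{(1)}$ and, simultaneously, to guarantee $x^{(2)} \geq 0$. Both directions are consequences of the same hypothesis, so a single invocation of the coordinatewise inequality, combined with linearity of the constraint left-hand sides, completes the argument.
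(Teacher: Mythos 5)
Your argument is correct and matches the paper's reasoning: the paper gives no explicit proof, justifying the observation solely by the remark that all coefficients are non-negative (the packing structure), and your verification via monotonicity of the constraint left-hand sides for $x^{(1)}$ and linearity plus non-negativity for $x^{(2)}$ is exactly the routine check that remark leaves implicit.
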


If there are only $\O(\log n)$ valuations $b_{v, T}$ non-zero, this LP
is solvable in polynomial time. In general, the elementary
representation of the $b_{v, T}$ values is exponential in $k$. We can
still solve the LP optimally if bidders can be represented by demand
oracles.

\subsection{Demand Oracles}
\label{sec:demand}
If there is an arbitrary number of channels, we must define an
appropriate way to query the valuation functions of the requests, as
an elementary description becomes prohibitively large. A standard way
to deal with this issue in the auction literature is the
representation by so-called \emph{demand oracles}. To query the demand
oracle of bidder $v$, we assign each channel $i$ a price $p_i$. Then
the oracle delivers his ``demand'' $S = \arg\max_{T \subseteq [k]}
b_{v,T} - \sum_{i \in T} p_i$, i.\,e., a bundle that maximizes the
utility of $v$ given that he pays the sum of prices of channels in the
bundle. In ordinary combinatorial auctions such demand oracles can be
used to separate the dual of the underlying LP. We here show that such
demand oracles can also be used for the solution of our LP~\eqref{eq:lp}.
Consider the dual given by
\begin{subequations}
\begin{align}
\text{Min. } & \sum_{v \in V} \sum_{j \in [k]} \rho y_{v,j} + \sum_{v \in
V} z_v \\
\text{s.\,t. } & \sum_{\substack{u \in V \\ v \in \Gamma_\pi(u)}} \sum_{j
\in T} y_{u,j} + z_v \geq b_{v,T} && \text{for all $v
\in V$, $T \subseteq [k]$} \\
& y_{v, j} \geq 0 && \text{for all $v \in V$, $j \in T$}
\end{align}
\end{subequations}
In contrast to ordinary combinatorial auctions, we cannot use the
solution $(y,z)$ directly as the channel prices. Instead, we choose
\emph{bidder-specific} channel prices by
\[
p_{v,j} = \sum_{\substack{u \in V \\ v \in \Gamma_\pi(u)}} 
y_{u,j}\enspace.
\]
Using this idea we see that the constraints of the dual are indeed
equivalent to upper bounds on the utility with bidder-specific channel
prices. By obtaining the demand bundle with highest utility for each
player, we find a violated constraint or verify that none exists. This
allows to separate the dual LP and to solve it efficiently using the
ellipsoid method. This way, we get an equivalent primal LP with only
polynomially constraints. The corresponding primal solution has only
polynomially many variables with $x^\ast_{v,T} > 0$.

\subsection{Rounding LP Solutions}

Having described the LP relaxation, we now analyze Algorithm~\ref{alg:unweighted-rounding} computing feasible allocations
from LP solutions as follows. First, it decomposes the given LP
solution to two solutions $x^{(1)}$ and $x^{(2)}$ (line 1). In $x^{(1)}$ all fractional variables $x_{v, T}$ for sets $T$ with $\lvert T \rvert \geq \sqrt{k}$ are set to zero. To get $x^{(2)}$ the exact opposite is performed. From each one, a feasible allocating is computed and the better one is selected at the end. This means, the algorithm either allocates only sets of size at most $\sqrt{k}$ or only of size at least $\sqrt{k}$.
The actual computation of the allocation works the
same way for both LP solutions. It consists of two major parts: a
rounding stage and a conflict-resolution stage. In the rounding stage
(lines 3--4), a tentative allocation is generated as follows. For each
vertex $v$ the set of allocated channels $S^{(l)}(v)$ is determined
independently at random. Each set $T \neq \emptyset$ is taken with
probability $x^{(l)}_{v, T} / 2 \sqrt{k} \rho$ and with the
remaining probability the empty set is allocated.

Conflicts can occur when two adjacent vertices share the same
channel. In this case, the conflict is resolved (lines 5--8) by
allocating the channel to the vertex with smaller index in the $\pi$
ordering. The other vertex is removed from the solution by being allocated
the empty set.

\begin{algorithm}
decompose $x$ into two solutions $x^{(1)}$ and $x^{(2)}$ by $x^{(1)}_{v,T} = x_{v,T}$ if $\lvert T \rvert \leq \sqrt{k}$ and $x^{(1)}_{v,T} = 0$ otherwise. $x^{(2)} = x - x^{(1)}$ \;
\For{$l \in \{1, 2\}$}{
\For(\tcc*[f]{Rounding Stage}){$v \in V$}{
with probability $\frac{x^{(l)}_{v, T}}{2 \sqrt{k} \rho}$ set $S^{(l)}(v) := T$\;
}
\For(\tcc*[f]{Conflict-Resolution Stage}){$v \in V$}{
\For{$u \in V$ with $\pi(u) < \pi(v)$ and $\{ u, v\} \in E$}{
\If{$S^{(l)}(u) \cap S^{(l)}(v) \neq \emptyset$}{
$S^{(l)}(v) := \emptyset$\;
}
}
}
}
return the better one of the solutions $S^{(1)}$ and $S^{(2)}$\;
\caption{LP rounding algorithm for the combinatorial auction problem
  with unweighted conflict graphs}
\label{alg:unweighted-rounding}
\end{algorithm}

\begin{theorem}
  For any feasible LP solution $x^\ast$ with value $b^\ast$,
  Algorithm~\ref{alg:unweighted-rounding} calculates a feasible
  allocation $S$ of value at least $\nicefrac{b^\ast}{8 \sqrt{k}
    \rho}$ in expectation.
\end{theorem}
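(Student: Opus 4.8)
The plan is to analyze the two halves $S^{(1)}$ and $S^{(2)}$ separately and then combine them. By the decomposition observation, both $x^{(1)}$ and $x^{(2)}$ are feasible LP solutions, and their objective values satisfy $b^{(1)} + b^{(2)} = b^\ast$, so $\max\{b^{(1)}, b^{(2)}\} \geq b^\ast/2$. I would show that for each $l$ the expected value of $S^{(l)}$ is at least $b^{(l)}/(4\sqrt{k}\rho)$. Since the returned solution has value $\max\{b(S^{(1)}), b(S^{(2)})\}$ and $\Ex{\max\{A,B\}} \geq \max\{\Ex{A}, \Ex{B}\}$, this yields expected value at least $\max_l b^{(l)}/(4\sqrt{k}\rho) \geq b^\ast/(8\sqrt{k}\rho)$, which is the claimed bound.

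First I would check feasibility of the output. The key point is that a vertex's set is only ever emptied, never enlarged, during the conflict-resolution stage; hence a vertex whose final set is nonempty carried exactly its rounded set throughout. If two adjacent vertices $u, v$ with $\pi(u) < \pi(v)$ both retained a common channel $j$, then when $v$ is processed the backward neighbor $u$ still holds $j$, so $S^{(l)}(u) \cap S^{(l)}(v) \ni j$ and $v$ would have been emptied --- a contradiction. Thus every channel class is an independent set and $S^{(l)}$ is a feasible allocation.

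For the value analysis, fix $l$, a vertex $v$, and a set $T$ with $x^{(l)}_{v,T} > 0$. In the rounding stage $v$ receives $T$ with probability $x^{(l)}_{v,T}/(2\sqrt{k}\rho)$ (a valid distribution since $\sum_T x^{(l)}_{v,T} \leq 1 \leq 2\sqrt{k}\rho$). Conditioned on $v$ being rounded to $T$, a sufficient condition for $v$ to keep $T$ is that no backward neighbor $u \in \Gamma_\pi(v)$ is rounded to a set meeting $T$; since rounding is independent across vertices, a union bound gives survival probability at least $1 - \sum_{u \in \Gamma_\pi(v)} \sum_{T' \colon T' \cap T \neq \emptyset} x^{(l)}_{u,T'}/(2\sqrt{k}\rho)$. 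Summing the contributions $b_{v,T}\cdot\Pr{v \text{ keeps } T}$ over all $v, T$ then yields $b^{(l)}/(4\sqrt{k}\rho)$, provided the conflicting mass $\sum_{u \in \Gamma_\pi(v)} \sum_{T' \colon T' \cap T \neq \emptyset} x^{(l)}_{u,T'}$ is at most $\sqrt{k}\rho$, so that the survival probability is at least $1/2$.

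Bounding this conflicting mass is the crux, and it is exactly where the split at the threshold $\sqrt{k}$ pays off, so I would treat the two halves differently. For $S^{(1)}$ every relevant $T$ has $\lvert T\rvert \leq \sqrt{k}$; estimating $\sum_{T'\colon T'\cap T\neq\emptyset} x_{u,T'} \leq \sum_{j \in T}\sum_{T' \ni j} x_{u,T'}$ and applying constraint~\eqref{eq:lp:niceproperty} once per channel $j \in T$ gives conflicting mass at most $\lvert T\rvert\rho \leq \sqrt{k}\rho$. For $S^{(2)}$ every $T$ has $\lvert T\rvert > \sqrt{k}$, so the per-channel constraint is used the other way: summing~\eqref{eq:lp:niceproperty} over all $k$ channels gives $\sum_{u \in \Gamma_\pi(v)}\sum_{T'} \lvert T'\rvert\, x_{u,T'} \leq \rho k$, and since each $\lvert T'\rvert > \sqrt{k}$ this forces the total mass $\sum_{u \in \Gamma_\pi(v)}\sum_{T'} x_{u,T'} < \rho\sqrt{k}$, which dominates the conflicting mass. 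Either way the conflicting mass stays below $\sqrt{k}\rho$, the survival probability exceeds $1/2$, and the per-half bound follows; combining as in the first paragraph completes the proof.
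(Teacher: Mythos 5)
Your proof is correct and follows essentially the same route as the paper's: the same decomposition into the two halves, the same per-vertex survival-probability bound of $\nicefrac{1}{2}$ obtained by charging conflicting mass to constraint~\eqref{eq:lp:niceproperty} once per channel of $T$ when $\lvert T\rvert \leq \sqrt{k}$ and by summing over all $k$ channels when $\lvert T\rvert > \sqrt{k}$. Your union bound is the paper's Markov-inequality step, and your way of combining the two halves (via $\max\{b^{(1)},b^{(2)}\}\geq b^\ast/2$) is an equivalent variant of the paper's averaging argument.
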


\begin{proof}
  The allocations $S^{(1)}$ and $S^{(2)}$ are obviously feasible
  allocations because if $\{u, v\} \in E$, then $S^{(1)}(u) \cap
  S^{(1)}(v) = \emptyset$ and $S^{(2)}(u) \cap S^{(2)}(v) =
  \emptyset$. Therefore, the output is also a feasible allocation.

  Let us now bound the expected values of solutions $S^{(1)}$ and
  $S^{(2)}$. Let $l \in \{ 1, 2\}$ be fixed. Let $X_{v, T}$ be a 0/1
  random variable indicating if $S^{(l)}(v)$ is set to $T$ after the
  rounding stage. Clearly, we have

  \begin{equation}
    \Ex{X_{v, T}} = \frac{x^{(l)}_{v, T}}{2 \sqrt{k} \rho} \enspace. \label{eq:unweighted-analysis:firstprobability}
  \end{equation}

  Let $X'_{v, T}$ be a 0/1 random variable indicating if $S^{(l)}(v)$
  is set to $T$ after the conflict-resolution stage. We consider the
  event that $X'_{v, T} = 0$, given that $X_{v, T} = 1$, i.\,e. that
  $v$ is removed in the conflict-resolution stage after having
  survived the rounding stage.

  \begin{lemma}
    \label{lemma:unweighted-analysis:lostinconflictresolution}
    The probability of being removed in the conflict-resolution stage
    after having survived the rounding stage is at most
    $\nicefrac{1}{2}$.
  \end{lemma}

  \begin{proof}
    The event can only occur if $X_{u, T'} = 1$ for some $u \in V$
    with $\pi(u) < \pi(v)$, $\{ u, v \} \in E$, and $T \cap T' \neq
    \emptyset$. In terms of the random variables $X_{u, T}$ this is
    \[
    \sum_{u \in \Gamma_\pi(v)} \sum_{\substack{T'
        \subseteq [k]\\ T \cap T' \neq \emptyset}} 
    X_{u, T'} \geq 1 \enspace.
    \]

    Using this notation we can bound the probability of the event by
    using the Markov inequality
    \begin{align*}
      \Pr{ X'_{v, T} = 0 \mid X_{v, T} = 1} & \leq \Pr{ \sum_{u \in \Gamma_\pi(v)} \sum_{\substack{T' \subseteq [k]\\ T \cap T' \neq \emptyset}} X_{u, T'} \geq 1 } \\
      & \leq \Ex{ \sum_{u \in \Gamma_\pi(v)}
        \sum_{\substack{T' \subseteq [k]\\ T \cap T' \neq \emptyset}} X_{u, T'} } \enspace.
    \end{align*}
 
    We will now show separately that this expectation is at most
    $\nicefrac{1}{2}$ for each of the two possible values of $l$
    ($l=1$ or $l=2$).

    \begin{description}
    \item[Case 1 ($l = 1$):] We have:
\[
\Ex{ \sum_{u \in \Gamma_\pi(v)} \sum_{\substack{T'
      \subseteq [k]\\ T \cap T' \neq \emptyset}} X_{u, T'} } \leq \Ex{ \sum_{j \in T} \sum_{u \in \Gamma_\pi(v)} \sum_{\substack{T' \subseteq [k]\\ j \in T'}}
  X_{u, T'} } \enspace.
\]

Due to linearity of expectation this is equal to
\[
\sum_{j \in T} \sum_{u \in \Gamma_\pi(v)} \sum_{\substack{T' \subseteq [k] \\ j \in T'}} \Ex{ X_{u, T'} } \enspace.
\]

Using Equation~\eqref{eq:unweighted-analysis:firstprobability} and the
fact that $x^{(1)}$ is an LP solution, this is
\[
\sum_{j \in T} \sum_{u \in \Gamma_\pi(v)} \sum_{\substack{T' \subseteq [k] \\ j \in T'}} \frac{x^{(1)}_{u, T'}}{2 \sqrt{k} \rho} \leq \sum_{j \in T}\frac{1}{2 \sqrt{k}} \enspace.
\]

Recall that we only have to deal with sets $T$ for which $\lvert T
\rvert \leq \sqrt{k}$ in this case. Hence, the expectation is at most
$\nicefrac{1}{2}$, and so is the probability that $v$ is removed in
the conflict-resolution stage.

\item[Case 2 ($l = 2$):] In this case, we have $X_{u, T'} > 0$ only for sets $T'$ with $\lvert T' \rvert \geq \sqrt{k}$. This yields for all $u \in V$
\begin{align*}
\sum_{\substack{T' \subseteq [k] \\ T \cap T' \neq \emptyset}} X_{u, T'} \leq \sum_{\substack{T' \subseteq [k] \\ T' \neq \emptyset}} X_{u, T'} 
 = \sum_{\substack{T' \subseteq [k]\\ T' \neq \emptyset}} \sum_{j \in T'} \frac{X_{u, T'}}{\lvert T' \rvert} 
 = \sum_{j \in [k]} \sum_{\substack{T' \subseteq [k]\\ j \in T'}} \frac{X_{u, T'}}{\lvert T' \rvert} 
 \leq \frac{1}{\sqrt{k}} \sum_{j \in [k]} \sum_{\substack{T' \subseteq [k]\\ j \in T'}} X_{u, T'} \enspace.
\end{align*}
So, we get
\[
\Ex{ \sum_{u \in \Gamma_\pi(v)} \sum_{\substack{T' \subseteq [k]\\ T \cap T' \neq \emptyset}} X_{u, T'} } \hspace{-.05cm} \leq \Ex{ \frac{1}{\sqrt{k}} \sum_{j \in [k]} \sum_{u \in \Gamma_\pi(v)} \sum_{\substack{T' \subseteq [k]\\ j \in T'}} X_{u, T'} } \enspace.
\]

Again, we use linearity of expectation,
Equation~\eqref{eq:unweighted-analysis:firstprobability} and the fact
that $x^{(2)}$ is an LP solution. This gives us
\[
\frac{1}{\sqrt{k}} \sum_{j \in [k]} \sum_{u \in \Gamma_\pi(v)} \sum_{\substack{T' \subseteq [k] \\ j \in T'}} \frac{x^{(2)}_{u, T'}}{2 \sqrt{k} \rho} \leq \frac{1}{\sqrt{k}} \sum_{j \in [k]}\frac{1}{2 \sqrt{k}} \leq \frac{1}{2} \enspace.
\]
This bounds the probability for the second case.
\end{description}

In both cases we have $\Pr{ X'_{v, T} = 0 \mid X_{v, T} = 1 } \leq \nicefrac{1}{2}$.
\end{proof}
Using Lemma~\ref{lemma:unweighted-analysis:lostinconflictresolution}
and Equation~\eqref{eq:unweighted-analysis:firstprobability} we get
for all $v \in V$ and $T \subseteq [k]$
\[
\Ex{X'_{v, T}} \geq \frac{x^{(l)}_{v, T}}{4 \sqrt{k} \rho} \enspace.
\]
This yields that both calculated solutions $S^{(l)}$ for $l \in \{ 1,
2 \}$ have expected value
\begin{align*}
\Ex{ b(S^{(l)}) } = \Ex{ \sum_{v \in V} \sum_{T \subseteq [k]} b_{v, T} \cdot X'_{v, T}} = \sum_{v \in V} \sum_{T \subseteq [k]} b_{v, T} \cdot \Ex{X'_{v, T}} \geq \frac{1}{4 \sqrt{k} \rho} \sum_{v \in V} \sum_{T \subseteq [k]} b_{v, T} x^{(l)}_{v, T} \enspace.
\end{align*}
So, the better ones of the two solutions has expected value
\begin{flalign*}
\Ex{ \max \{ b(S^{(1)}), b(S^{(2)}) \} } & \geq \frac{1}{2} \left( \Ex{ b(S^{(1)})} + \Ex{b(S^{(2)}) } \right) \\
& \geq \frac{1}{8 \sqrt{k} \rho} \sum_{v \in V} \sum_{S \subseteq [k]} b_{v, S} \left( x^{(1)}_{v, S} + x^{(2)}_{v, S} \right) \\
& = \frac{1}{8 \sqrt{k} \rho} \sum_{v \in V} \sum_{S \subseteq [k]} b_{v, S} x_{v, S} = \frac{b^\ast}{8 \sqrt{k} \rho}\enspace.  
\end{flalign*}
\end{proof}


\subsection{Hardness Results}
\label{sec:hardness}

In this section we provide matching lower bounds for the approximation
ratios of our algorithms. This shows that the above results cannot be
vitally improved without further restricting the model. Our results
are based on the hardness of approximating independent set in
bounded-degree graphs~\cite{Trevisan2001} or general
graphs~\cite{Hastad1999}.
A first result is that the $\O(\rho)$ algorithm for the case $k=1$ is
almost optimal.

\begin{theorem}
  For $k=1$ and for each $\rho=\O(\log n)$ there is no
  $\nicefrac{\rho}{2^{\O(\sqrt{\log \rho})}}$ approximation algorithm
  unless \classP = \classNP.
\end{theorem}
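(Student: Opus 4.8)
The plan is to reduce from the inapproximability of maximum independent set in bounded-degree graphs. Since $k=1$, a valuation assigns to each vertex $v$ only the two values $b_{v,\emptyset}$ and $b_{v,\{1\}}$; setting $b_{v,\emptyset}=0$ and $b_{v,\{1\}}=1$, a feasible allocation is exactly the choice of an independent set, and its social welfare equals the cardinality of that set. Hence Problem~\ref{def:problem} with $k=1$ and unit weights is precisely unweighted maximum independent set, and it suffices to exhibit hard instances whose \inductiveindependence{} equals the quantity $\rho$ appearing in the statement.

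The bridge between the degree of a graph and its \inductiveindependence{} is the following observation, which I would isolate as a lemma: if $G$ is triangle-free with minimum degree $\delta$ and maximum degree $\Delta$, then $\delta \le \rho \le \Delta$. The upper bound is immediate from Definition~\ref{def:IIN}, since for \emph{any} ordering $\pi$ we have $\lvert M \cap \Gamma_\pi(v)\rvert \le \lvert \Gamma_\pi(v)\rvert \le \deg(v) \le \Delta$. For the lower bound, fix an arbitrary ordering $\pi$ and let $v$ be its last vertex; then every neighbour of $v$ precedes it, so $\Gamma_\pi(v) = N(v)$, and triangle-freeness makes $N(v)$ an independent set of size $\deg(v) \ge \delta$. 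Taking $M = N(v)$ certifies a backward value of at least $\delta$ for this ordering, and since $\pi$ was arbitrary, $\rho \ge \delta$. In particular, every triangle-free $\Delta$-regular graph has $\rho = \Delta$.

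With this in hand I would invoke Trevisan's result~\cite{Trevisan2001}: for every sufficiently large $\Delta$ it is \classNP-hard to approximate maximum independent set on graphs of maximum degree $\Delta$ within a factor $\Delta / 2^{\O(\sqrt{\log \Delta})}$. Realising the gap instances as triangle-free, (nearly) $\Delta$-regular graphs, the lemma pins $\rho = \Theta(\Delta)$, so that $2^{\O(\sqrt{\log \Delta})} = 2^{\O(\sqrt{\log \rho})}$ and the gap reads $\rho / 2^{\O(\sqrt{\log \rho})}$; any algorithm beating this factor would decide independent set on these instances, forcing \classP = \classNP. To obtain the statement for every $\rho = \O(\log n)$ rather than only for constant values, I would run the reduction with target degree $\Delta = \rho(n)$: as long as $\Delta = \O(\log n)$ the reduction remains polynomial, so NP-hardness is preserved, while $\log \rho = \Theta(\log\Delta)$ keeps the sub-polynomial factor unchanged. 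The main obstacle is precisely this structural requirement on the instances --- arranging that Trevisan's gap graphs can be taken triangle-free and nearly regular without shrinking the inapproximability gap, and checking that the gap survives into the growing-degree regime $\Delta = \rho$ up to $\log n$, where one must ensure the reduction keeps both $n$ polynomial and $\rho$ at the intended value.
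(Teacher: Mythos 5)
Your reduction is the same one the paper uses---Trevisan's hardness of independent set on graphs of maximum degree $\Delta$, transported to the parameter $\rho$ via the relation between degree and the \inductiveindependence{}---but you have made it strictly harder than necessary, and the extra part is exactly where your unresolved gap sits. The paper's proof uses only the one-sided bound $\rho \le \Delta$, which holds for \emph{every} graph of maximum degree $\Delta$ with no structural assumptions. Since the target ratio $f(\rho)=\nicefrac{\rho}{2^{\O(\sqrt{\log\rho})}}$ is monotone increasing in $\rho$ (up to the constants hidden in the $\O$), an algorithm guaranteeing ratio $f(\rho(G))$ on every input $G$ in particular guarantees ratio at most $f(\Delta)=\nicefrac{\Delta}{2^{\O(\sqrt{\log\Delta})}}$ on every graph of maximum degree $\Delta$, and that is precisely what Trevisan rules out for all $\Delta=\O(\log n)$. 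No lower bound on $\rho$ is needed to reach the contradiction, so there is no need to realise the gap instances as triangle-free or nearly regular.

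Consequently, the step you flag as ``the main obstacle''---arranging that Trevisan's gap graphs are triangle-free and nearly $\Delta$-regular so that your lemma pins $\rho=\Theta(\Delta)$---is both unverified in your write-up (you give no modification of Trevisan's construction that secures these properties while preserving the gap) and unnecessary. Your lemma $\delta\le\rho\le\Delta$ for triangle-free graphs is correct as far as it goes, and it would be the right tool if one insisted on exhibiting hard instances whose \inductiveindependence{} is \emph{exactly} a prescribed value $\rho$; but for ruling out algorithms whose guarantee is the stated function of the parameter $\rho$ of the input, the bound $\rho\le\Delta$ together with monotonicity of the ratio already suffices. If you drop the lower-bound half of your lemma and the triangle-freeness/regularity requirements, what remains is the paper's proof.
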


\begin{proof}
  Such an algorithm could be used to approximate Independent Set in
  bounded-degree graphs. Given a graph with maximum degree $d$ its
  \inductiveindependence{} $\rho$ is also at most
  $d$. Trevisan~\cite{Trevisan2001} shows that there is no
  $\nicefrac{d}{2^{\O(\sqrt{\log d})}}$-approximation algorithm for
  all $d=\O(\log n)$ unless \classP = \classNP. This directly yields
  the claim.
\end{proof}

As a second result we can also prove the impact of the number of
channels $k$ has to be as large as $\sqrt{k}$.
 
\begin{theorem}
  Even for $\rho = 1$ there is no
  $k^{\frac{1}{2}-\varepsilon}$-approximation algorithm unless
  \classZPP = \classNP.
\end{theorem}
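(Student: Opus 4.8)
The plan is to reduce Maximum Independent Set on general graphs to our allocation problem restricted to a \emph{clique} conflict graph, which has $\rho = 1$. First I would observe that if $G$ is the complete graph $K_N$, then every independent set contains at most one vertex, so for any ordering $\pi$ and any vertex $v$ we have $|M \cap \Gamma_\pi(v)| \le 1$; hence $\rho = 1$. Moreover, feasibility on a clique forces each channel to be assigned to at most one bidder, so Problem~\ref{def:problem} on $K_N$ is exactly a standard combinatorial auction with $k$ items and $N$ bidders. The conceptual point of the whole argument is that the hardness will be injected through the \emph{bundle/channel} structure rather than the conflict graph, so it survives even though the graph is trivial and $\rho = 1$.

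Next I would encode an Independent Set instance $H = (V_H, E_H)$ with $N = |V_H|$ vertices into such an auction. Take $k = |E_H|$ channels, one per edge of $H$, and one single-minded bidder per vertex: bidder $v$ has value $1$ for every bundle containing $T_v := \{ e \in E_H \mid v \in e \}$ and value $0$ otherwise. The key is that two bundles $T_u, T_v$ intersect precisely when $\{u,v\} \in E_H$. Since each channel serves at most one bidder, the bidders that receive value $1$ in any feasible allocation have pairwise disjoint bundles, i.e. form an independent set of $H$; conversely every independent set gives a feasible allocation of the same welfare. Hence the optimal social welfare equals $\alpha(H)$, and from \emph{any} feasible allocation one reads off, in polynomial time, an independent set of $H$ whose size equals the welfare.

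Then I would do the parameter bookkeeping. The construction is polynomial: $N$ bidders, $k = |E_H| < N^2$ channels, and the valuations are succinctly specified by the bundles $T_v$ (so a demand oracle is trivial, and the hardness applies to algorithms using either representation). A putative polynomial-time $k^{\frac12-\varepsilon}$-approximation algorithm, run on this instance, returns an allocation of welfare at least $\alpha(H)/k^{\frac12-\varepsilon}$, hence an independent set of that size. Using $k < N^2$ gives $k^{\frac12-\varepsilon} < N^{1-2\varepsilon}$, so we obtain an $N^{1-2\varepsilon}$-approximation for Maximum Independent Set on $H$.

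Finally I would invoke H\aa stad's inapproximability result~\cite{Hastad1999}: Maximum Independent Set admits no $N^{1-\delta}$-approximation for any constant $\delta > 0$ unless $\classZPP = \classNP$. Setting $\delta = 2\varepsilon$ yields a contradiction, establishing the theorem. No step here is genuinely deep; the main thing to get right is the exponent bookkeeping—ensuring $k$ is polynomially bounded by $N$ so that an improvement in the $k^{1/2}$ term forces a forbidden $N^{1-\delta}$ factor—together with checking that the single-minded encoding makes the winners of \emph{every} allocation (not just the optimum) coincide with independent sets, so that the hardness transfers to approximate solutions.
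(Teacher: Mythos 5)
Your proof is correct and is precisely the standard reduction (clique conflict graph, items = edges of $H$, single-minded bidders desiring their incident edges) that the paper itself invokes: the paper does not write the argument out but simply observes that the problem with $\rho=1$ subsumes ordinary combinatorial auctions and cites the known $k^{\frac12-\varepsilon}$ lower bound from \cite[Chapter 9]{Nisan07}, which is derived exactly as you describe. Your exponent bookkeeping ($k < N^2$, so a $k^{\frac12-\varepsilon}$-approximation yields an $N^{1-2\varepsilon}$-approximation of independent set, contradicting \cite{Hastad1999}) is the right way to make that citation explicit.
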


Our framework extends general combinatorial auctions with $k$ items,
and this is a standard result in the area~\cite[Chapter 9]{Nisan07}
derived from the hardness of independent set in general graphs.

In conclusion, our algorithmic results are supported by almost
matching lower bounds in each parameter. Without further restricting the
graph properties (which means to use additional properties of an
interference model) no vitally better approximation guarantees can be
achieved in terms of $\rho$ resp. $k$. However, this does not prove no
$O(\rho + \sqrt{k})$ approximation can exist.

\section{Edge-weighted Conflict Graphs}
\label{sec:weighted}

In this section we extend conflicts over binary relations
(conflict/no-conflict). In wireless communication, we encounter
situations that a radio transmission is exposed to interference by a
number of devices relatively far away. If there was only a single one
of them, interference would be acceptable but their overall
interference is too high. For such aggregation aspects we introduce
edge-weighted conflict graphs, in which there is a non-negative weight
$w(u,v)$ between any pair of vertices $u, v \in V$. An
\emph{independent set} is defined as a set $M \subseteq V$ such that
$\sum_{u \in M} w(u, v) < 1$ for all $v \in M$.

The definition of the \inductiveindependence{} can be generalized in a
straightforward way. Since edge weights need not be symmetric, it
turns out to be convenient to use the following symmetric edge weights
$\bar{w}(u, v) = w(u, v) + w(v, u)$.
\begin{definition}
  The \emph{\inductiveindependence} of an edge-weighted graph $G$ is
  the minimum number $\rho$ such that there is a total ordering
  $\pi\colon V \to [n]$ (bijective function) which fulfills for all
  vertices $v$ and all independent sets $M \subseteq \left\{ u \in V
    \mid \pi(u) < \pi(v) \right\}$ the following condition:
  \[
  \sum_{u \in M} \bar{w}(u, v) \leq \rho \enspace.
  \]
\end{definition}

In the same way as in the unweighted case, we can use the definition
to formulate the LP relaxation.
\begin{subequations}
  \label{eq:lp-weighted}
  \begin{align}
    \text{Max. }& \sum_{v \in V} \sum_{T \subseteq [k]} b_{v,T} x_{v,T} && \\
    \text{s.\,t. } & \sum_{\substack{u \in V \\ \pi(u) < \pi(v)}} \sum_{\substack{T \subseteq [k] \\ j \in T}} \bar{w}(u, v) \cdot x_{u,T} \leq \rho  && \text{for all $v \in V$, $j \in [k]$} \label{eq:lp-weighted:niceproperty} \\
   & \sum_{T \subseteq [k]} x_{v, T} \leq 1 && \text{for all $v \in V$} \\
   & x_{v, T} \geq 0 && \text{for all $v \in V$, $T \subseteq
      [k]$} 
  \end{align}
\end{subequations}

In weighted graphs we lose an important property we made use of in
unweighted graphs: Resolving conflicts in one direction
only does not suffice. To cope with this issue, we increase the
scaling by another factor of 2. We use rounding and conflict
resolution as previously to ensure that for each vertex $v$ the sum of
edge weights to neighboring vertices that have smaller indices and
share a channel with $v$ is at most $\nicefrac{1}{2}$. Formally, a
\emph{partly-feasible allocation} is an allocation $S\colon V \to
2^{[k]}$ such that
\begin{equation}
  \sum_{\substack{u \in V \\ \pi(u) < \pi(v) \\ S(v) \cap S(u) \neq \emptyset}} \bar{w}(u,v) < \frac{1}{2} \qquad \text{for all $v \in V$} \enspace. \label{eq:weighted-partlyfeasible}
\end{equation}
Rounding LP solutions to such partly-feasible allocations can be
carried out in a similar way as Algorithm~\ref{alg:unweighted-rounding}.
Algorithm~\ref{alg:weighted-rounding} decomposes the given LP solution
the same way as Algorithm~\ref{alg:unweighted-rounding}. Afterwards, it also performs two
stages. In the rounding stage (lines~2--4), again a tentative
allocation is determined randomly by considering the LP solution as a
probability distribution.

Afterwards, only a partial conflict resolution (lines~5--8) is
performed: If for some vertex $v$ the sum of edge weights to neighbors
that have lower $\pi$ values and share a channel exceeds
$\nicefrac{1}{2}$, it is removed from the solution (i.\,e. it is
allocated the empty set). Such a partly-feasible solution satisfies
Equation~\eqref{eq:weighted-partlyfeasible}.
 
\begin{algorithm}
decompose $x$ into two solutions $x^{(1)}$ and $x^{(2)}$ by $x^{(1)}_{v,T} = x_{v,T}$ if $\lvert T \rvert \leq \sqrt{k}$ and $x^{(1)}_{v,S} = 0$ otherwise. $x^{(2)} = x - x^{(1)}$\;
\For{$l \in \{1, 2\}$}{
\For(\tcc*[f]{Rounding Stage}){$v \in V$}{
with probability $\frac{x_{v, T}}{4 \sqrt{k} \rho}$ set $S^{(l)}(v) := T$}
\For(\tcc*[f]{Partial Conflict-Resolution Stage}){$v \in V$}{
set $U(v) := \{ u \in V \mid \pi(u) < \pi(v),\linebreak S^{(l)}(v) \cap S^{(l)}(u) \neq \emptyset \}$\;
\If{$\sum_{u \in U(v)} \bar{w}(u, v) \geq \frac{1}{2}$}{
$S^{(l)}(v) := \emptyset$
}
}
}
return the better one of the allocations $S^{(1)}$ and $S^{(2)}$
\caption{LP rounding algorithm for the combinatorial auction problem
  with weighted conflict graphs}
\label{alg:weighted-rounding}
\end{algorithm}

\begin{lemma}
  For any feasible LP solution $x^\ast$ with value $b^\ast$,
  Algorithm~\ref{alg:weighted-rounding} calculates a partly-feasible
  allocation $S$ of value at least $\nicefrac{b^\ast}{16 \sqrt{k}
    \rho}$ in expectation.
\end{lemma}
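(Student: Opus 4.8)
The plan is to reproduce the analysis of Algorithm~\ref{alg:unweighted-rounding} essentially step by step, adjusting only the constants to account for the weaker \emph{partial} conflict resolution and for the symmetric weights $\bar w$. First I would check that the returned allocation is partly-feasible, i.e.\ that it satisfies~\eqref{eq:weighted-partlyfeasible}. This is enforced directly by lines~5--8: a vertex $v$ keeps its set only if $\sum_{u \in U(v)} \bar w(u,v) < \tfrac12$ at the moment it is processed, where $U(v)$ collects the lower-$\pi$ vertices sharing a channel with $v$. Since the algorithm only ever replaces a non-empty $S^{(l)}(u)$ by $\emptyset$ and never the reverse, the backward conflicting set of a surviving $v$ can only shrink afterwards, so the final allocation still obeys~\eqref{eq:weighted-partlyfeasible}; this holds for both $S^{(1)}$ and $S^{(2)}$, hence for the output.

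Next I would bound the expected value, following the template of the unweighted theorem. Fix $l \in \{1,2\}$ and let $X_{v,T}$ (resp.\ $X'_{v,T}$) be the $0/1$ indicator that $S^{(l)}(v) = T$ after the rounding (resp.\ conflict-resolution) stage, so that $\Ex{X_{v,T}} = x^{(l)}_{v,T}/(4\sqrt k\,\rho)$. The crux is the analogue of Lemma~\ref{lemma:unweighted-analysis:lostinconflictresolution}: conditioned on $X_{v,T} = 1$, vertex $v$ is discarded only if $\sum_{u\colon \pi(u)<\pi(v)} \sum_{T'\colon T \cap T' \neq \emptyset} \bar w(u,v)\, X_{u,T'} \ge \tfrac12$. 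Applying Markov's inequality bounds this probability by twice the expectation of that sum, so it suffices to show the expectation is at most $\tfrac14$.

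I would establish the $\tfrac14$ bound in the same two cases as before. For $l=1$ I bound the event $T \cap T' \neq \emptyset$ by a union over $j \in T$, push the expectation inside by linearity, substitute $\Ex{X_{u,T'}} = x^{(1)}_{u,T'}/(4\sqrt k\,\rho)$, and invoke the weighted LP constraint~\eqref{eq:lp-weighted:niceproperty} to obtain $\sum_{j \in T} \tfrac{1}{4\sqrt k} = \tfrac{\lvert T\rvert}{4\sqrt k} \le \tfrac14$, using $\lvert T\rvert \le \sqrt k$. For $l=2$ I first replace $\sum_{T'\colon T\cap T'\neq\emptyset} X_{u,T'}$ by $\tfrac{1}{\sqrt k}\sum_{j\in[k]}\sum_{T'\ni j} X_{u,T'}$, exactly as in Case~2 of the unweighted proof (this is where $\lvert T'\rvert \ge \sqrt k$ enters), and again apply~\eqref{eq:lp-weighted:niceproperty} to reach $\tfrac14$. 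In both cases the removal probability is at most $\tfrac12$, so $\Ex{X'_{v,T}} \ge x^{(l)}_{v,T}/(8\sqrt k\,\rho)$.

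Finally, summing over $v$ and $T$ gives $\Ex{b(S^{(l)})} \ge \tfrac{1}{8\sqrt k\,\rho}\sum_{v,T} b_{v,T}\,x^{(l)}_{v,T}$, and taking the better of the two solutions (the maximum dominates the average) together with $x^{(1)} + x^{(2)} = x^\ast$ yields the claimed $\tfrac{1}{16\sqrt k\,\rho}\sum_{v,T} b_{v,T}\,x_{v,T} = b^\ast/(16\sqrt k\,\rho)$. The only genuinely new element relative to the unweighted case is the factor-of-two bookkeeping: the rounding scaling is now $4\sqrt k\,\rho$ (halving each probability) while the conflict threshold is $\tfrac12$ rather than $1$, so Markov contributes an extra factor $2$; these compensate to keep the per-case expectation at $\tfrac14$. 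I expect the main care-points to be the processing-order argument for partly-feasibility and verifying that the $\bar w(u,v)$ factors align precisely with the weighted constraint~\eqref{eq:lp-weighted:niceproperty} rather than interfering with the union/Markov steps; the weaker threshold of $\tfrac12$ is exactly what the weighted setting forces, since conflicts can no longer be resolved in a single direction.
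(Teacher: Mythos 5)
Your proposal is correct and takes essentially the same route as the paper: decompose the LP solution, round with scaling $4\sqrt{k}\rho$, bound the removal probability by $\nicefrac{1}{2}$ via Markov's inequality at threshold $\nicefrac{1}{2}$ (the extra factor of $2$ being absorbed by the per-case expectation bound of $\nicefrac{1}{4}$ from constraint~\eqref{eq:lp-weighted:niceproperty}), and take the better of the two solutions. The paper simply invokes the two-case analysis of Lemma~\ref{lemma:unweighted-analysis:lostinconflictresolution} verbatim where you write it out, and it asserts partly-feasibility without the monotonicity-of-removals argument you supply; the content is the same.
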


\begin{proof}
  The allocation is partly feasible since both allocations $S^{(1)}$
  and $S^{(2)}$ satisfy Condition~\eqref{eq:weighted-partlyfeasible}.

  For the value of the solution let us again bound the value of the
  partly-feasible allocations $S^{(1)}$ and $S^{(2)}$. Again, let us
  fix $l \in \{ 1, 2\}$. Let $X_{v, T}$ be a 0/1 random variable
  indicating if $S^{(l)}(v)$ is set to $T$ after the rounding
  stage. This time, we have
  \begin{equation}
    \Ex{X_{v, T}} = \frac{x^{(l)}_{v, T}}{4 \sqrt{k} \rho} \enspace. \label{eq:weighted-analysis:firstprobability}
  \end{equation}

  Let $X'_{v, T}$ be a 0/1 random variable indicating if $S^{(l)}(v)$
  is set to $T$ after the partial conflict-resolution stage. Again, we
  consider the event that $X'_{v, T} = 0$, given that $X'_{v, T} = 1$,
  i.\,e., that $v$ is removed in the conflict-resolution stage after
  having survived the rounding stage.

  Lemma~\ref{lemma:unweighted-analysis:lostinconflictresolution}
  cannot be directly applied in this case. However, we have
  \begin{align*}
    \Pr{ X'_{v, T} = 0 \mid X_{v, T} = 1 } & \leq \Pr{ \sum_{\substack{u \in V \\ \pi(u) < \pi(v)}} \sum_{\substack{T' \subseteq [k] \\ T \cap T' \neq \emptyset}} \bar{w}(u, v) \cdot X_{u, T'} \geq \frac{1}{2} } \\
    & \leq 2 \Ex{ \sum_{\substack{u \in V \\ \pi(u) < \pi(v)}} \sum_{\substack{T' \subseteq [k] \\ T \cap T' \neq \emptyset}} \bar{w}(u, v) \cdot X_{u, T'}  } \\
    & = \Ex{ \sum_{\substack{u \in V \\ \pi(u) < \pi(v)}}
      \sum_{\substack{T' \subseteq [k] \\ T \cap T' \neq \emptyset}}
      \bar{w}(u, v) \cdot 2 X_{u, T' } }
  \end{align*}
  due to the Markov inequality and linearity of expectation. Now, we
  can use exactly the same arguments as in the proof of
  Lemma~\ref{lemma:unweighted-analysis:lostinconflictresolution}
  literally. This proof relies on two conditions: the bound on
  $\Ex{X_{u, T}}$ and the fact that $x^{(l)}$ is a feasible LP
  solution. Both conditions are again satisfied.

  This implies that $\Pr{ X'_{v, T} = 0 \mid X_{v, T} = 1 } \leq
  \nicefrac{1}{2}$ for both cases $l \in \{ 1, 2 \}$. In combination
  with Equation~\eqref{eq:weighted-analysis:firstprobability}, we get
  for all $v \in V$, $T \subseteq [k]$
  \[
  \Pr{ X'_{v, T} = 1 } \geq \frac{x^{(l)}_{v, T}}{8 \sqrt{k} \rho}
  \]
  Thus, we can conclude that for $l \in \{ 1, 2\}$, we have
  \[
  \Ex{b(S^{(l)})} \geq \frac{1}{8 \sqrt{k} \rho} \sum_{v \in V}
  \sum_{T \subseteq [k]} b_{v, T} x^{(l)}_{v, T} \enspace.
  \]

  The expected value of the output is at least
  \[
  \Ex{ \max \{ b(S^{(1)}), b(S^{(2)}) \} } \geq
  \frac{1}{16 \sqrt{k} \rho} \sum_{v \in V} \sum_{T \subseteq
    [k]} b_{v, T} \left( x^{(1)}_{v, T} + x^{(2)}_{v, T} \right) =
  \frac{b^\ast}{16 \sqrt{k} \rho} \enspace.
  \]
\end{proof}

Given a partly-feasible allocation $S$,
Algorithm~\ref{alg:weighted-makefeasible} implements the necessary additional
conflict resolution to derive a fully-feasible one. The algorithm decomposes the partly-feasible
allocation to a number of feasible candidate allocations $S_1$, $S_2$,
\ldots. Each allocation $S_i$ is initialized such that $S_i(v) = S(v)$
if vertex $v$ has been removed from all previous allocations $S_1,
\ldots, S_{i-1}$. Otherwise $S_i(v) = \emptyset$. Then a conflict
resolution is performed on $S_i$: The vertices are considered by
decreasing indices in the $\pi$ ordering. If the weight bound is
violated for some vertex $v$ in the current allocation $S_i$, it is
removed from the allocation by allocating the empty set. At the end,
the best one of the candidate allocations is returned. We will see
that each candidate allocation allocates at least half of the remaining
vertices a non-empty set. Therefore at most $\lceil \log n \rceil$ candidates are computed and the
best one has value at least $\nicefrac{b(S)}{\lceil \log n \rceil}$.
\begin{algorithm}
$i := 1$ \; $V' := V$ \;
\While{$V' \neq \emptyset$}{
initialize $S_i$ by $S_i(v) := S(v)$ for $v \in V'$ and $S_i(v) := \emptyset$ otherwise \;
\For{$v \in V'$ in order of decreasing $\pi$ values}{
\If{$\sum_{\substack{u \in V',\; S_i(v) \cap S_i(u) \neq \emptyset}} \bar{w}(u,v) < 1$}{
delete $v$ from $V'$ \tcc*[f]{$v$ stays in $S_i$}\;
}
\Else{
$S_i(v) := \emptyset$ \tcc*[f]{$v$ is removed from $S_i$}\;
}
}
$i := i + 1$\;
}
return the best one of the allocations $S_1$, $S_2$, \ldots
\caption{Making a partly-feasible allocation fully feasible}
\label{alg:weighted-makefeasible}
\end{algorithm}
\begin{lemma}
\label{lemma:weighted-makefeasible} 
Given a (not necessarily feasible) allocation $S$ in which
Condition~\ref{eq:weighted-partlyfeasible} is fulfilled for all $v \in
V$, Algorithm~\ref{alg:weighted-makefeasible} calculates a feasible
allocation of value at least $\nicefrac{b(S)}{\lceil \log n \rceil}$.
\end{lemma}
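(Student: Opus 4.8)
The plan is to establish two claims about Algorithm~\ref{alg:weighted-makefeasible}: first, that every candidate allocation $S_i$ produced in the loop is feasible, and second, that $V'$ loses at least half of its elements in each iteration. Granting both, the while loop terminates after at most $\lceil \log n \rceil$ iterations, since $|V'|$ at least halves each time and starts at most $n$. Moreover, every vertex $v$ is deleted from $V'$ in exactly the one iteration in which it is kept, and in that iteration $S_i(v)=S(v)$ while $S_{i'}(v)=\emptyset$ for all other $i'$ (normalizing $b_{v,\emptyset}=0$ as usual). Hence $\sum_i b(S_i)=b(S)$, so the best of the at most $\lceil \log n\rceil$ candidates has value at least $b(S)/\lceil \log n\rceil$, which is the claimed bound.

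For feasibility I would fix an iteration $i$, a kept vertex $v$, and a channel $j\in S(v)$, and show that the set of kept vertices holding $j$ forms an independent set. Because vertices are processed in order of decreasing $\pi$, at the moment $v$ is tested every neighbour $u$ with $\pi(u)>\pi(v)$ has already been decided: if such a $u$ was kept then $S_i(u)=S(u)$ and it is still counted in $v$'s test, while neighbours with $\pi(u)<\pi(v)$ are still pending with $S_i(u)=S(u)$ and are counted as well. Thus every kept vertex $u$ with $j\in S(u)$, whether earlier or later than $v$ in $\pi$, contributes its term $\bar{w}(u,v)$ to the sum checked for $v$, which is strictly below $1$. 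Since $w(u,v)\le \bar{w}(u,v)$ and all weights are non-negative, $\sum_{u\text{ kept},\,j\in S(u)} w(u,v)<1$, so the vertices carrying channel $j$ satisfy the weighted independent-set condition; as $v$ and $j$ were arbitrary, $S_i$ is feasible.

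The delicate part, and the main obstacle, is the survival claim. Let $I_i$ and $R_i$ be the kept and removed vertices of iteration $i$. A removed vertex $v$ failed its test, so its $\bar{w}$-sum over currently nonempty neighbours is at least $1$; by Condition~\eqref{eq:weighted-partlyfeasible} the contribution of its backward neighbours ($\pi(u)<\pi(v)$) is below $\frac{1}{2}$, hence its forward contribution $\sum_{u\in I_i,\,\pi(u)>\pi(v),\,\text{share}}\bar{w}(u,v)$ exceeds $\frac{1}{2}$. Summing over $v\in R_i$ gives more than $|R_i|/2$. I would then swap the summation order and invoke the symmetry $\bar{w}(u,v)=\bar{w}(v,u)$ to rewrite this as $\sum_{u\in I_i}\sum_{v\in R_i,\,\pi(v)<\pi(u),\,\text{share}}\bar{w}(v,u)$, where the inner sum is a backward-weight sum for the kept vertex $u$ and is therefore below $\frac{1}{2}$ by Condition~\eqref{eq:weighted-partlyfeasible} applied to $u$. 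This bounds the total by less than $|I_i|/2$, so $|R_i|<|I_i|$ and strictly more than half of $V'$ survives. The crux is exactly this charging step: the partly-feasible guarantee only controls backward neighbourhoods, and it is the symmetry of $\bar{w}$ that lets one reinterpret the forward weight amassed by removed vertices as backward weight of the kept vertices, where the bound can be applied.
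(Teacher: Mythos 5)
Your proposal is correct and follows essentially the same route as the paper: the identical charging argument that uses the symmetry of $\bar{w}$ to reinterpret the forward weight accumulated by removed vertices as backward weight of kept vertices, bounds it via Condition~\eqref{eq:weighted-partlyfeasible}, and concludes that $V'$ at least halves per iteration, giving the $\lceil \log n \rceil$ bound and the value guarantee. The only difference is that you spell out the feasibility of each candidate $S_i$ explicitly, which the paper dismisses as obvious.
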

\begin{proof}
Obviously, by construction all candidates are feasible and so is the output allocation.

Next, we prove that we need at most $\log n$ iterations of the while loop by showing that in each iteration at most half of the remaining vertices are removed from the allocation. This means the cardinality of $V'$ is at least halved in each iteration. Let $V'_i$ be the set $V'$ after the $i$th iteration of the \emph{while} loop; $V'_0 = V$. 

Let us fix $i \in \mathbb{N}$, and $v \in V'_{i+1}$. We know that $v$ has been removed from $S_i$ by the algorithm. This only happens if
\[
\sum_{\substack{u \in V'_i \\ S_i'(v) \cap S_i'(u) \neq \emptyset}} \bar{w}(u,v) \geq 1
\]
where $S_i'$ is the current state of $S_i$ while the algorithm considers $v$.
Since Equation~\ref{eq:weighted-partlyfeasible} is obviously also satisfied for $S_i'$, it has to be
\[
\sum_{\substack{u \in V'_i \\ \pi(u) > \pi(v) \\ S_i'(v) \cap S_i'(u) \neq \emptyset}} \bar{w}(u,v) \geq \frac{1}{2} \enspace.
\]

For a vertex $u \in V'_i$ with $\pi(u) > \pi(v)$ we know that the vertex has either been removed from the allocation before (then $u \in V'_{i+1}$) or it stays in $S_i$ (i.\,e. $S'_i(u) = S_i(u) = S(u)$ and $u \notin V'_{i+1}$). Hence 
\[
S_i'(u) = \begin{cases}
          \emptyset & \text{ if $u \in V'_{i+1}$} \\
          S(u) & \text{ else} \\
          \end{cases} \enspace.
\]

Combining these two insights, we get a necessary condition: if $v \in V'_{i+1}$ then
\[
\sum_{u \in U_i(v) \setminus U_{i+1}(v)} \bar{w}(u,v) \geq \frac{1}{2} \enspace,
\]
where $U_i(v) = \left\{u \in V'_i \mid \pi(u) > \pi(v), S(v) \cap S(u) \neq \emptyset \right\}$. Summing up all $v \in V'_{i+1}$ we get
\[
\sum_{v \in V'_{i+1}} \sum_{ u \in U_i(v) \setminus U_{i+1}(v)} \bar{w}(u,v) \geq \frac{1}{2} \lvert V'_{i+1} \rvert \enspace.
\]

On the other hand, we can change the ordering of the sums and use the symmetry of the weights $\bar{w}$ to get
\begin{align*}
\sum_{v \in V'_{i+1}} \sum_{u \in U_i(v) \setminus U_{i+1}(v)} \bar{w}(u,v)
& = \sum_{u \in V'_i  \setminus V'_{i+1}} \sum_{\substack{v \in V'_{i+1}\\ \pi(u) > \pi(v) \\ S(v) \cap S(u) \neq \emptyset}} \bar{w}(u,v) \\
& = \sum_{v \in V'_i  \setminus V'_{i+1}} \sum_{\substack{u \in V'_{i+1}\\ \pi(u) < \pi(v) \\ S(v) \cap S(u) \neq \emptyset}} \bar{w}(u,v) \\
& < \frac{1}{2} \lvert V'_i  \setminus V'_{i+1} \rvert
\end{align*}
where the last bound is due to Condition~\eqref{eq:weighted-partlyfeasible}.

In combination this yields
\[
\lvert V'_{i+1} \rvert < \lvert V'_i \setminus V'_{i+1} \rvert \enspace,
\]
which implies
\[
\lvert V'_{i+1} \rvert < \frac{1}{2} \lvert V'_i\rvert \enspace,
\]
meaning less than half of the remaining vertices are removed in each iteration. 

So, since $\lvert V'_0 \rvert = n$, we can conclude that
\[
\lvert V'_i \rvert < \frac{1}{2^i} \cdot n \enspace.
\]
Therefore, we get $\lvert V'_{\lceil \log n \rceil} \rvert < 1$. Thus the algorithm terminates within $\lceil \log n \rceil$ steps.

By definition, for all vertices $S_i(v) = S(v)$ for exactly one $i \in [\lceil \log n \rceil]$ and $S_i(v) = \emptyset$ else. So $\sum_{i \in [\lceil \log n \rceil]} b(S_i) = b(S)$. This yields for the value of the output
\[
\max_{i \in [\lceil \log n \rceil]} b(S_i) \geq \frac{1}{\lceil \log n \rceil} \sum_{i \in [\lceil \log n \rceil]} b(S_i) = \frac{b(S)}{\lceil \log n \rceil} \enspace.
\]
\end{proof}
As a consequence, the computed feasible allocation has a value that in
expectation is at most an $\O(\sqrt{k} \rho \log n)$ factor smaller
than that of the optimal LP solution.
\section{Applications}
\label{sec:models}

In the previous sections we have described a general algorithmic
approach to channel allocation problems when the underlying conflict
graph has bounded \inductiveindependence{}. Here we will show that
this property is particularly wide-spread among models for
interference in wireless communication. Our aim is not to prove optimal bounds in each case but to
show why we believe a bounded \inductiveindependence{} to be a key
insight for understanding algorithmic problems in wireless networking.

The concept of conflict graphs
can be applied in two basic scenarios. On the one hand, the task could
be to allocate channels to \emph{transmitters}. Each transmitter
intends to cover a certain area, e.g., a base station in a cellular
network. The interference model defines which transmitters can be
assigned the same channels. On the other hand, instead of single
transmitters one can consider pairs of network nodes (\emph{links})
that act as sender and receiver. In such a scenario, ``users'' are no
single network nodes but links. Therefore, the vertices of the
conflict graph are links, and edges define which links can be assigned
the same channels.

\subsection{Transmitter Scenarios}
A very simple, yet instructive model for a transmitter scenario is as
follows. We have $n$ transmitters located in the plane at points $p_1,
\ldots, p_n \in \RR^2$. Each of the transmitters has a transmission
range $r_1, \ldots, r_n \in \RR_{>0}$. Transmitters may be assigned
the same channel if their transmission ranges do not intersect. Under
these conditions interference constraints can be modeled by a disk
graph. There is an edge between two vertices if the transmission-range
disks around the corresponding receivers intersect.

\begin{proposition}
  \label{prop:diskgraph}
  Disk graphs have an \inductiveindependence{} of $\rho \leq 5$.
\end{proposition}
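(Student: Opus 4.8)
The plan is to exhibit one ordering $\pi$ and bound, for every vertex $v$, the size of any independent set contained in its backward neighborhood $\Gamma_\pi(v)$ by $5$, as required by Definition~\ref{def:IIN}. The correct choice is to order by \emph{decreasing} transmission radius: put $\pi(u) < \pi(v)$ whenever $r_u > r_v$, breaking ties arbitrarily; this is computable in $O(n\log n)$ time. With this ordering every backward neighbor $u \in \Gamma_\pi(v)$ is a disk $D_u$ (center $p_u$, radius $r_u$) with $r_u \geq r_v$ whose disk intersects $D_v$, i.e.\ $|p_u - p_v| \leq r_u + r_v$. It is essential that backward neighbors be the \emph{larger} disks: the increasing-radius orientation fails, since arbitrarily many pairwise-disjoint tiny disks can all touch the boundary of one large $D_v$.

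Fix $v$ and write $r := r_v$, $o := p_v$, and let $M \subseteq \Gamma_\pi(v)$ be independent; I want $\lvert M\rvert \leq 5$. The case $\lvert M\rvert \leq 1$ is trivial, so assume $\lvert M\rvert \geq 2$. First I would record that no $u \in M$ is centered at $o$: if $p_u = o$, then every other $u' \in M$ satisfies $\lvert p_{u'} - p_u\rvert = \lvert p_{u'} - o\rvert \leq r_{u'} + r \leq r_{u'} + r_u$ (the middle step because $D_{u'}$ meets $D_v$), so $D_u$ and $D_{u'}$ would intersect, contradicting independence; hence such a $u$ would be the only element of $M$. Thus every $p_u \neq o$, and the unit vector $e_u$ from $o$ towards $p_u$ is well defined.

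The crux is the claim that any two $u,u' \in M$ have angular separation $\angle(e_u,e_{u'}) > 60^\circ$. To prove it I would set $a = \lvert o - p_u\rvert$, $b = \lvert o - p_{u'}\rvert$ and $\theta = \angle(e_u,e_{u'})$, and argue by contradiction assuming $\cos\theta \geq \tfrac12$. The law of cosines with the elementary identity (for $a \geq b \geq 0$, $a^2 - ab + b^2 \leq a^2$) gives
\[
\lvert p_u - p_{u'}\rvert^2 = a^2 + b^2 - 2ab\cos\theta \leq a^2 + b^2 - ab \leq \max(a,b)^2 .
\]
Since both disks meet $D_v$ we have $a \leq r_u + r$ and $b \leq r_{u'} + r$, so $\max(a,b) \leq \max(r_u,r_{u'}) + r \leq r_u + r_{u'}$, the last inequality using $r \leq \min(r_u,r_{u'})$. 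Hence $\lvert p_u - p_{u'}\rvert \leq r_u + r_{u'}$, i.e.\ $D_u$ and $D_{u'}$ intersect, contradicting the independence of $M$. The pleasant feature is that no \emph{lower} bound on $a$ or $b$ is ever needed, so nested configurations require no separate handling.

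Finally I would convert the angular bound into the count: the directions $\{e_u : u \in M\}$ are pairwise more than $60^\circ$ apart around $o$, so sorting them by angle, the consecutive gaps sum to $360^\circ$ while each exceeds $60^\circ$, forcing $\lvert M\rvert < 6$, i.e.\ $\lvert M\rvert \leq 5$. As $v$ and $M$ were arbitrary, the ordering $\pi$ witnesses $\rho \leq 5$. I expect the only genuinely delicate points to be selecting the decreasing-radius orientation and spotting the algebraic simplification that bypasses the containment/nesting cases; the remaining steps are routine.
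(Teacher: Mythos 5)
Your proposal is correct and follows essentially the same route as the paper: order the disks by decreasing radius and show that two independent backward neighbors must subtend an angle greater than $60^\circ$ at $p_v$, so at most five fit. The only difference is that you spell out the ``simple geometric arguments'' the paper leaves implicit (the law-of-cosines computation and the degenerate coincident-center case), which is a welcome but not substantively different elaboration.
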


\begin{proof}
  Let $G=(V,E)$ be a disk graph. Let $\pi$ be the ordering of vertices
  by decreasing radius of the corresponding disk. So in other words $V
  = \{ v_1, \ldots, v_n \}$ with $r_1 \geq r_2 \geq \ldots \geq r_n$,
  where $r_i$ is the radius of the disk around $v_i$. If the disk
  representation is given, this ordering can be computed in polynomial
  time by simply sorting the vertices. Let be $v \in V$ and $M
  \subseteq \{ u \in V \mid \pi(u) < \pi(v) \}$ be an independent set
  in $G$. By definition of the ordering, we have for all $u \in M$ the
  radius is at least $r_{\pi(v)}$.

  In order to show $\lvert M \rvert \leq 5$, we assume $\lvert M
  \rvert \geq 6$. This would yield that there were two vertices whose
  angle seen from $v$ was at most $60^\circ$. From simple geometric
  arguments we can conclude there has to be an edge between these two
  vertices. This contradicts the assumption that $M$ is an independent
  set and thereby proves the claim.
\end{proof} 

Another example for the transmitter scenario is the so-called
\emph{distance-2 coloring}. In contrast to the above model not only
the neighbors (with intersecting disks) must be on different channels
but also their neighbors. Distance-2 coloring is a common model of
transmitter scenarios. Here, we analyze the restriction on two graph
classes. We refer the reader to \cite{Krumke2001} for the exact
definitions and a discussion of the model. We can prove $\rho = \O(1)$
as well in this case.

\begin{lemma}
  Let $r > 0$, $a>0$ and $D$ be a disk of radius $ar$. Then the number
  of disks of radius at least $r$ that intersect $D$ but not each
  other is at most $(a+2)^2$.
\end{lemma}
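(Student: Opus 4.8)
The plan is to run a standard area-packing argument, where the only genuine subtlety is that the candidate disks may be arbitrarily large while $D$ has the fixed radius $ar$. To neutralize this, I would first replace each of the pairwise-disjoint disks by a disk of radius exactly $r$ that sits inside it and stays close to $D$, and then count how many such uniform-size disks can be packed into a slightly enlarged copy of $D$.

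Concretely, write $D_1, \dots, D_m$ for the disks in question, with centers $c_i$ and radii $\rho_i \geq r$, and let $c$ denote the center of $D$. Since each $D_i$ meets $D$, I can fix a point $x_i \in D_i \cap D$; note $\lvert x_i - c\rvert \leq ar$. The key step is to slide from $x_i$ toward $c_i$ by a distance of at most $r$ to reach a point $y_i$ whose radius-$r$ disk $B_i$ still lies inside $D_i$. This works because $\lvert x_i - c_i\rvert \leq \rho_i$, so moving a distance $t = \max\{0,\ \lvert x_i - c_i\rvert - \rho_i + r\} \leq r$ toward $c_i$ reduces the distance to $c_i$ to at most $\rho_i - r$, which is exactly the condition for a radius-$r$ disk centered at $y_i$ to be contained in $D_i$.

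With this construction the two required facts follow immediately. First, since $B_i \subseteq D_i$ and the $D_i$ are pairwise disjoint, the disks $B_1, \dots, B_m$ are pairwise disjoint as well. Second, $\lvert y_i - c\rvert \leq \lvert y_i - x_i\rvert + \lvert x_i - c\rvert \leq r + ar = (a+1)r$, so each $B_i$ (radius $r$) lies inside the disk of radius $(a+2)r$ centered at $c$. Comparing areas then yields $m \cdot \pi r^2 \leq \pi (a+2)^2 r^2$, i.e.\ $m \leq (a+2)^2$, which is the claim.

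I expect the main obstacle to be precisely the step of extracting an inner radius-$r$ disk that remains close to $D$ no matter how large $\rho_i$ is; once the sliding estimate $t \leq r$ is in place, the disjointness, the containment in the radius-$(a+2)r$ disk, and the final area count are all routine. A secondary point to verify is the degenerate case $\lvert x_i - c_i\rvert \leq \rho_i - r$, where no sliding is needed and one simply takes $y_i = x_i$; this is already absorbed by the $\max$ in the definition of $t$.
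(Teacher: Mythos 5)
Your proof is correct and takes essentially the same route as the paper's: replace each (possibly large) disk by an inscribed disk of radius exactly $r$ that stays close to $D$, observe these remain pairwise disjoint and fit inside the disk of radius $(a+2)r$ around the center of $D$, and conclude by comparing areas. The only difference is that you make the paper's informal ``scale them down and move them closer'' step precise via the explicit sliding estimate $t \leq r$.
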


\begin{proof}
  W.\,l.\,o.\,g., we assume the surrounding disks to have radius exact
  $r$. By scaling them down and moving them inside their original
  area, they still do not intersect each other. By moving them to the
  respective closest location to $D$, they still intersect $D$.

  The disks of radius $r$ are fully contained within the disk of
  radius $kr + 2r$ around the center of $D$. Each takes an area of
  $\pi r^2$, whereas the available area is only $\pi (ar + 2r)^2$. So,
  the number of surrounding disks is at most $\pi (ar + 2r)^2 / \pi
  r^2 = (a+2)^2$.
\end{proof}

\begin{proposition}
  For Distance-2 coloring in disk graphs the associated conflict graph
  has an \inductiveindependence{} $\rho = \O(1)$.
\end{proposition}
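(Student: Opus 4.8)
The plan is to reuse the ordering from Proposition~\ref{prop:diskgraph}: sort the vertices by decreasing radius, so that $v_1,\dots,v_n$ have $r_1\geq\cdots\geq r_n$, and $\pi(u)<\pi(v)$ means $r_u\geq r_v$. The conflict graph here is the \emph{square} $G^2$ of the disk graph $G$ (an edge whenever two disks intersect or share a common neighbor). Fix a vertex $v$ and an independent set $M\subseteq\{u\mid\pi(u)<\pi(v)\}$ of $G^2$ inside the backward neighborhood; I want to show $\lvert M\rvert=\O(1)$, after which the definition of the \inductiveindependence{} (Definition~\ref{def:IIN}) gives the claim. Every $u\in M$ satisfies $r_u\geq r_v$ and lies at distance at most $2$ from $v$ in $G$. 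Crucially, since $M$ is independent in $G^2$, any two $u,u'\in M$ are at distance at least $3$ in $G$; in particular their disks $D_u,D_{u'}$ are disjoint. Thus $M$ gives a family of pairwise non-intersecting disks of radius at least $r_v$, which is exactly the setting of the preceding lemma.

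First I would split $M$ into the vertices $u$ that are \emph{adjacent} to $v$ in $G$ and those at distance exactly $2$. For an adjacent $u$, the disk $D_u$ meets $D_v$; applying the preceding lemma with $D=D_v$ (so $a=1$) bounds the number of such $u$ by $(1+2)^2=9$. This settles the adjacent part directly.

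The distance-$2$ part is the crux. For each such $u$ I would fix an intermediate vertex $w_u$ whose disk meets both $D_v$ and $D_u$. The key structural observation is that the map $u\mapsto w_u$ is injective: if two vertices $u,u'\in M$ shared the same intermediate $w$, then $w$ would be a common neighbor of $u$ and $u'$ in $G$, putting them at distance at most $2$ and contradicting independence in $G^2$. Next I would bucket the distance-$2$ vertices by the radius of their intermediate: when $r_{w_u}=\O(r_v)$, the triangle inequality $\lvert p_u-p_v\rvert\leq r_u+2r_{w_u}+r_v$ shows $D_u$ meets a disk of radius $\O(r_v)$ centred at $p_v$, and the preceding lemma with a constant $a$ again yields an $\O(1)$ bound.

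The remaining subcase, where $u$ is reached through a \emph{large} intermediate disk, is the main obstacle: such a $u$ can be geometrically far from $v$, so no single disk of radius $\O(r_v)$ around $p_v$ captures all of $M$, and the lemma cannot be applied naively. Here I would exploit the restriction to the two graph classes under consideration (which cap the radius ratio and thereby rule out large intermediates), or, for the general case, combine the injectivity above with an angular packing argument in the spirit of Proposition~\ref{prop:diskgraph}: a large disk meeting the tiny $D_v$ subtends a large angle at $p_v$, so only $\O(1)$ of the intermediates can be angularly separated, and two insufficiently separated intermediates would force their (distinct) images $u,u'$ to lie within distance $2$ in $G$, again contradicting independence. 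Summing the $\O(1)$ contributions from all cases gives $\lvert M\rvert=\O(1)$, and hence $\rho=\O(1)$.
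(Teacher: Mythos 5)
Your decomposition follows the paper's proof closely: the same decreasing-radius ordering, the same split into direct neighbors of $v$ versus vertices reached through an intermediate disk, the same use of the packing lemma for the adjacent and small-intermediate cases, and the same injectivity observation (a shared intermediate would put $u,u'$ at distance $2$, contradicting independence). Those parts are fine. The problem is the case you yourself flag as the crux, and your proposed resolution of it does not work. If two large intermediates $w_u, w_{u'}$ are angularly close as seen from $p_v$, the most you can conclude is that $D_{w_u}$ and $D_{w_{u'}}$ intersect; that yields the path $u - w_u - w_{u'} - u'$ and hence $d_G(u,u') \le 3$, not $\le 2$. Distance exactly $3$ is perfectly compatible with $M$ being independent in the distance-2 conflict graph, so no contradiction arises. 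Concretely, take a unit disk $D_v$ at the origin, disks $w_1,\dots,w_m$ of huge radius $R$ whose centers lie at distance $R-\nicefrac{1}{2}$ from the origin at $m$ distinct angles (so all of them contain $p_v$ and pairwise intersect), and for each $i$ a disk $u_i$ of radius $1+\varepsilon$ placed on the far rim of $w_i$; for $R$ large relative to $m$ one checks that $u_i$ meets $w_i$ but no $w_j$ with $j\neq i$ and no other disk, so the $u_i$ are pairwise at distance exactly $3$ in $G$, all at distance $2$ from $v$, and all of radius at least $r_v$. No angular packing argument at $p_v$ alone can rule this configuration out. Your other escape hatch --- capping the radius ratio --- is not available either: that restriction belongs to the $(r,s)$-civilized proposition, which is a separate statement; disk graphs here have unbounded radius ratio.

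For comparison, the paper dispatches this case in one sentence by asserting that ``the disks around the intermediate vertices also do not intersect,'' which would reduce it to the five-disk angular argument of Proposition~\ref{prop:diskgraph}. That assertion is exactly what the configuration above puts under pressure, since pairwise-intersecting intermediates only force their images to distance $3$. So you have correctly located the hard step, but neither of your two proposed ways of closing it is valid as written, and the large-intermediate case remains open in your argument. Any complete proof must use more than the geometry at $p_v$ --- for instance, the fact that each large intermediate is adjacent to exactly one element of $M$ combined with how the $u_i$ sit relative to \emph{all} of the intermediates, or possibly a different ordering of the vertices altogether.
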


\begin{proof}
  As for disk graphs, we order the vertices by decreasing ranges. Now
  consider a vertex $v$ and a conflicting vertex $u$ of larger
  range. This vertex can either be directly connected to $v$ (there
  are at most $5$ ones of this kind) or via an intermediate vertex
  $u'$. If the $r_{u'} < r_v$ is smaller, we see that the disk of
  radius $r_u$ around $u$ intersects the one of radius $2 r_v$ around
  $v$. The above lemma yields that there can be at most a constant
  number of such vertices. For the case $r_{u'} \geq r_v$, we take
  into consideration the disks around the intermediate vertices also
  do not intersect. So, there can be at most $5$ intermediate vertices
  and as many conflicting vertices. The total number of conflicting
  vertices is constant.
\end{proof}

\begin{proposition}
  For Distance-2 coloring in $(r,s)$-civilized graphs the
  \inductiveindependence{} of the associated conflict graph is $\rho
  \leq (4r/s + 2)$.
\end{proposition}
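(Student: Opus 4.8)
The plan is to imitate the geometric packing arguments used for the disk-graph cases, exploiting the two defining metric features of an $(r,s)$-civilized layout: every edge has Euclidean length at most $r$, while any two distinct vertices lie at distance at least $s$. The conflict graph for distance-$2$ coloring is the square $G^2$, i.e. two vertices conflict exactly when they are at graph distance at most $2$ in $G$. The crucial first step is the observation that a conflict is geometrically short: if $u$ and $v$ are at graph distance at most $2$, there is a path $u$--$w$--$v$ (or $u$--$v$) whose at most two edges each have length at most $r$, so by the triangle inequality the points of $u$ and $v$ lie within Euclidean distance $2r$.

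Next I fix the ordering. Unlike the disk-graph case there is no radius to sort by, and none is needed: the bound I aim for controls the \emph{entire} distance-$\leq 2$ neighborhood of each vertex, so any fixed ordering $\pi$ (e.g. the identity) suffices. For a vertex $v$ and an arbitrary independent set $M$ of $G^2$, every vertex of $M \cap \Gamma_\pi(v)$ is in particular a conflict-neighbor of $v$, hence by the observation above its point lies in the disk $D$ of radius $2r$ centered at the point of $v$.

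The counting step is then exactly the packing argument of the disk-packing lemma stated above. Since any two distinct vertices are separated by at least $s$, the open disks of radius $s/2$ placed at the points of $M \cap \Gamma_\pi(v)$ are pairwise disjoint and each is contained in the disk of radius $2r + s/2$ about $v$. Applying the lemma with separating disks of radius $s/2$ and enclosing disk $D$ of radius $2r = (4r/s)\cdot(s/2)$, i.e. with parameter $a = 4r/s$, bounds the number of such vertices by $(a+2)^2$ and thus bounds $\rho$; in particular this is constant whenever the ratio $r/s$ is bounded, which is the point of the proposition.

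I expect the main obstacle to be purely bookkeeping: converting the combinatorial ``distance at most $2$'' into the metric radius $2r$, and then instantiating the packing lemma at the correct scale ($s/2$ for the separating disks, $2r$ for the enclosing disk). A secondary subtlety, worth pursuing if one wants the sharpest constant, is to use the \emph{independence} of $M$ rather than merely the separation $s$: at most one vertex of $M$ can be a genuine $G$-neighbor of $v$, since two such would share the common neighbor $v$ and hence be at graph distance $2$, contradicting that $M$ is independent in $G^2$; every remaining vertex of $M$ reaches $v$ through its own \emph{distinct} intermediate neighbor of $v$ (a shared intermediate would again force graph distance $2$). This gives an alternative route that charges each distance-$2$ vertex of $M$ to a distinct neighbor of $v$ and then packs those neighbors inside the disk of radius $r$; pinning down which route yields the tightest bound in $r/s$ is the last loose end.
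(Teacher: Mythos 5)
Your proof matches the paper's argument essentially verbatim: an arbitrary ordering (the paper likewise notes the ordering does not matter), the observation that graph distance at most $2$ implies Euclidean distance at most $2r$, and the packing lemma applied with separating disks of radius $s/2$ inside an enclosing disk of radius $2r$ around $v$, i.e.\ with parameter $a = 4r/s$. Note that both your argument and the paper's own proof actually conclude with the bound $(4r/s+2)^2$, so the missing square in the proposition's statement appears to be a typo in the paper rather than a gap on your side.
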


\begin{proof}
  In this case, the ordering does not matter. Therefore, we do not
  need to know the geometric representation of the graph.

  Consider a vertex $v$ and a set of vertices $M$ conflicting with $v$
  but not with each other. Since the path length from $v$ to each
  vertex in $M$ is at most $2$, the distance in the plane is at most
  $2r$. Now consider disks around the vertices in $M$, each of radius
  $\nicefrac{s}{2}$. By definition of the $(r,s)$-civilized graph
  these disks do not intersect each other. However, each of them
  intersects a disk of radius $2r$ around $v$. Applying the above
  lemma, we see there are at most $(4r/s + 2)^2$ such disks.
\end{proof}

As a matter of fact $\rho$ has to depend on this ratio of $r$ and
$s$. Obviously, all graphs can be represented as $(r, s)$-civilized if
the ratio $r$ and $s$ is unbounded. However, our algorithm's running
time does not depend on $r$ and $s$. Therefore, the approximation
factor has to depend on them.

\subsection{Unweighted Link-Based Scenarios}

There are a number of different interference models for link-based
scenarios that can be described by some unweighted conflict
graph. They are often called graph-based interference models, but to
avoid ambiguities we refer to them as \emph{binary interference
  models}. Due to the large variety, we have to confine ourselves to
some selected examples.
 
Probably the best known binary model is the \emph{Protocol
  Model}~\cite{Gupta2000}. Network nodes are modeled by points located
in the plane. A link consisting of sender $s$ and receiver $r$ may be
allocated to a channel if and only if for all other senders $s'$ on
this channel $d(s', r) \geq (1 + \Delta) d(s, r)$ for some constant
$\Delta > 0$.
\begin{proposition}[Wan~\cite{Wan2009}]
  For the protocol model, the resulting conflict graph has an
  \inductiveindependence{} of
  \[
  \rho \leq \left\lceil \pi / \arcsin \frac{\Delta}{2(\Delta + 1)}
  \right\rceil -1 \enspace.
  \]
\end{proposition}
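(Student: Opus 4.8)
The plan is to reuse the length-based ordering that already works for disk graphs (cf.\ Proposition~\ref{prop:diskgraph}) and to reduce the claim to a planar angular-packing estimate. Concretely, I would let $\pi$ order the links by non-increasing length, so that the link placed first has the largest value of $d(s,r)$. Fix a link $v=(s,r)$ of length $\ell$. Every backward neighbor $u=(s_u,r_u)$ then satisfies $d(s_u,r_u)\geq\ell$ and conflicts with $v$, and the backward neighbors forming an independent set $M$ are pairwise non-conflicting. The goal becomes bounding $\lvert M\rvert$ by the stated constant, since the maximum independent set in the backward neighborhood is exactly $\rho$.

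First I would translate the protocol-model conditions into metric inequalities. Conflict of $u$ with $v$ means one of the two interference constraints is violated, i.e.\ $d(s_u,r)<(1+\Delta)\ell$ or $d(s,r_u)<(1+\Delta)\,d(s_u,r_u)$, which places a representative endpoint of $u$ within distance $(1+\Delta)\ell$ of the corresponding endpoint of $v$. Independence of two backward neighbors $u_i,u_j\in M$ means $d(s_i,r_j)\geq(1+\Delta)\,d(s_j,r_j)$ and $d(s_j,r_i)\geq(1+\Delta)\,d(s_i,r_i)$; combined with $d(s_i,r_i),d(s_j,r_j)\geq\ell$ and the triangle inequality, this forces the senders to be far apart, giving $d(s_i,s_j)\geq\Delta\ell$. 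With these bounds I would finish by a packing argument around $r$: each link of $M$ contributes a point confined to a disk of radius $(1+\Delta)\ell$ about $r$, any two such points are separated by roughly $\Delta\ell$, and in the extremal configuration (both points on the boundary of the disk) they subtend an angle of exactly $2\arcsin\!\big(\tfrac{\Delta}{2(1+\Delta)}\big)$ at $r$. Partitioning the $2\pi$ directions around $r$ into sectors of this width then yields $\lvert M\rvert\leq\lceil\pi/\arcsin(\Delta/(2(1+\Delta)))\rceil-1$.

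The hard part will be the third step, turning the crude distance inequalities into a \emph{uniform} angular separation. The difficulty is that ``chord $\geq\Delta\ell$ inside a disk of radius $(1+\Delta)\ell$'' does not by itself exclude near-radially aligned points, which subtend a vanishing angle at $r$; ruling these out requires exploiting the full link geometry — that both links have length at least $\ell$ and that the cross-distances $d(s_i,r_j),d(s_j,r_i)$ are large — together with a careful case split on which of the two interference directions causes each conflict. This is precisely where the argument of Wan does its work, and it is also what produces the exact constant (including the $-1$, which reflects the strict inequalities and degenerate boundary placements); I would invoke~\cite{Wan2009} for the sharp bound rather than re-deriving every boundary case, noting only that the resulting $\rho$ is a constant depending on $\Delta$ and that the ordering by length is computable in polynomial time from the node positions, as required by our algorithms.
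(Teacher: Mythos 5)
The paper offers no proof of this proposition at all---it is imported as a black box from Wan~\cite{Wan2009}---so your proposal, which sketches the length-ordering-plus-angular-packing argument and then defers the sharp constant to that same reference, is consistent both with what the paper does and with how the bound is actually obtained; your identification of the genuinely hard step (radially aligned points defeating the naive chord-to-angle conversion inside the disk of radius $(1+\Delta)\ell$) is accurate. One inaccuracy is worth flagging: your claim that a conflict always ``places a representative endpoint of $u$ within distance $(1+\Delta)\ell$ of the corresponding endpoint of $v$'' fails for the second interference direction, since $d(s,r_u)<(1+\Delta)\,d(s_u,r_u)$ only confines $r_u$ to a disk of radius $(1+\Delta)\,d(s_u,r_u)$ around $s$, and for backward neighbors $d(s_u,r_u)\geq\ell$ is unbounded above---so those links are not confined to any fixed disk around $v$, and the case split on the two conflict directions is needed already to set up the packing, not merely to sharpen its constant.
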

The \emph{IEEE 802.11 Model} by Alicherry et al.~\cite{Alicherry2005}
is a bidirectional variant of this model, and in this case $\rho \leq
23$~\cite{Wan2009}.
 
A more graph-theoretical approach is \emph{distance-2
  matching}~\cite{Balakrishnan2004}. In this case, two edges $e \neq
e'$ may be allocated to the same channel if there are at least two
edges on any connecting path. Typically, results are restricted to
certain graph classes, because in general approximating maximum
distance-2 matchings is hard. For disk graphs, we can also show that
the corresponding conflict graph has $\rho = \O(1)$. Interestingly,
for distance-2 matching there is already an algorithm and analysis
based on the observation that the \inductiveindependence{} is bounded,
but the concepts are termed differently. Barrett et
al. \cite{Barrett2006} analyze a greedy approach to find a maximum
independent set. For a link $e = (u, v)$, they define $r(e) = r(u) +
r(v)$, where $r(u)$ and $r(v)$ are the radius of the disk surrounding
$u$ resp. $v$. The algorithm orders the links by increasing values of
$r(e)$. The key observation is now that for all links $e$ the maximum
number of links of higher index that collide with $e$ but not with
each other is $\O(1)$. This immediately yields $\rho = \O(1)$.

\begin{corollary}
  For distance-2 matching in disk graphs the associated conflict graph
  has an \inductiveindependence{} $\rho = \O(1)$.
\end{corollary}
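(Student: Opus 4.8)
The plan is to promote the Barrett et al.\ ordering to a witness ordering for the \inductiveindependence{} and to establish the $\O(1)$ bound by a disk-packing argument that reuses the lemma stated above. Concretely, I would order the links (the vertices of the conflict graph) by decreasing $r(e) = r(u)+r(v)$, where $u,v$ are the endpoints of link $e$; given the disk representation this ordering is computable in polynomial time by sorting. With this choice the backward neighborhood of a link $e$ consists of conflicting links $e'$ with weakly larger value $r(e') \ge r(e)$, so it suffices to bound the size of any pairwise non-conflicting (independent) family $M$ of such links by a constant.

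Next I would recall the conflict rule: links $e$ and $e'$ conflict exactly when their graph distance is less than two, i.e.\ they share an endpoint or some endpoint-disk of $e$ intersects some endpoint-disk of $e'$. Fixing $e=(u,v)$, let $D$ be a disk of radius $\Theta(r(e))$ centered at $u$; since $u,v$ are adjacent we have $d(u,v)\le r(u)+r(v)=r(e)$, so both endpoint-disks of $e$ lie inside $D$. For each $e'\in M$ I would single out one witness endpoint whose disk has radius at least $r(e)/2$ and still meets $D$. The selection splits into two cases according to which endpoint of $e'$ is close to $e$: if the near endpoint (the one whose disk meets an endpoint-disk of $e$) already has radius $\ge r(e)/2$, take it; otherwise its radius is below $r(e)/2$, which forces the other endpoint of $e'$ to carry radius $\ge r(e)/2$, and a short triangle-inequality estimate shows this large endpoint-disk nonetheless reaches within distance $\O(r(e))$ of $u$, hence meets $D$.

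Finally, since the links of $M$ are pairwise non-conflicting, no endpoint-disk of one meets an endpoint-disk of another, so the selected witness disks are pairwise disjoint. They all have radius at least $r := r(e)/2$ and all intersect a disk $D$ of radius $a r$ for a constant $a = \O(1)$, so the lemma bounds their number, and hence $\lvert M\rvert$, by $(a+2)^2 = \O(1)$; this gives $\rho = \O(1)$. I expect the main obstacle to be the geometry of the second case: controlling the large disk attached to the \emph{far} endpoint, since $r(e')$ may be arbitrarily large and a priori that endpoint could sit far from $e$. The resolution is exactly the case split above: the far endpoint is only pressed into service when the near endpoint has small radius, and the smallness of the near endpoint is precisely what keeps the far disk within $\O(r(e))$ of $u$, so that the packing lemma applies uniformly across $M$.
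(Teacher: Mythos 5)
Your proof is correct, but it is genuinely more self-contained than what the paper does. The paper proves this corollary essentially by citation: it points to Barrett et al., who order links by increasing $r(e)=r(u)+r(v)$ and observe that for each link the number of pairwise non-colliding links of larger $r$-value that collide with it is $\O(1)$, and notes that this observation \emph{is} the inductive independence bound under a different name. You use the same ordering (reversed to decreasing $r(e)$, which is the correct translation so that the bound lands on the \emph{backward} neighborhood as Definition~\ref{def:IIN} requires) but then actually supply the geometry, reusing the paper's disk-packing lemma from the distance-2 coloring subsection. The one idea your argument needs beyond that lemma is the case split selecting, for each conflicting link $e'$ with $r(e')\ge r(e)$, a witness endpoint-disk of radius at least $r(e)/2$ that still meets a disk of radius $\O(r(e))$ around an endpoint of $e$; your handling of the second case (small near endpoint forces a large far endpoint, and the smallness of the near disk is exactly what keeps the far disk within $\O(r(e))$ by the triangle inequality) checks out, with a concrete constant of roughly $(6+2)^2$. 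The only step you leave implicit, which does hold, is that distinct links of $M$ cannot share or intersect witness endpoint-disks, since either would itself be a distance-2 conflict; this is what makes the witnesses pairwise disjoint so the packing lemma applies. What your route buys is a proof that does not depend on the external reference; what the paper's route buys is brevity and the observation that existing greedy analyses already certify bounded inductive independence.
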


Analyses of greedy algorithms are often carried out in a similar
manner. Such arguments already suffice to bound the
\inductiveindependence{}. There is plenty of opportunity to further
extend our results by similar observations.

\subsection{Physical Model}

The models mentioned above go well with graph-theoretic
concepts. However, radio transmissions typically decrease
asymptotically with increasing distance. The \emph{physical model}
captures this property much more accurately and is particularly
wide-spread among engineers. Even though the physical model does not
fit in the traditional binary graph-theoretic context, it has similar
properties allowing it to be expressed using edge-weighted conflict graphs.

In this model, network nodes are located in a metric space. The
received signal strength decreases as the distance increases. If a
node transmits at a power level $p$, the signal strength at a distance
of $d$ is $\nicefrac{p}{d^\alpha}$, for a constant $\alpha$. A
transmission is received successfully if ratio of the received signal
strength of the intended transmission and the stengths of concurrent
transmissions plus ambient noise is above some constant
threshold $\beta > 0$. More formally, given pairs of senders $s_i$ and
receivers $r_i$ that transmit at power level $p_i$, receiver $r_i$ can
decode the signal from sender $s_i$ successfully if the SINR
constraint
\[
\frac{p_i}{d(s_i,r_i)^{\alpha}} \ge \beta \left(
\sum_{\substack{j \in M \setminus \{i\}}}
\frac{p_j}{d(s_j,r_i)^{\alpha}} + \nu \right)
\]
is fulfilled. Here $M$ is the set of links transmitting at the same
time on the same channel and $\nu \geq 0$ is a constant expressing
ambient noise. 

Note that we can easily reduce the model to a conflict graph if
transmission powers are fixed. Prominent and simple classes of power
assignments $p\colon V \to \RR_{>0}$ are uniform ($p(v) = 1$) or
linear ($p(v) = d(s_v,r_v)^\alpha$) assignments. More generally, we
can consider assignments satisfying the following monotonicity
constraints. If $d(\ell) \leq d(\ell')$ for two links $\ell, \ell'$
then
\[
p(\ell) \leq p(\ell') \quad \text{ and } \qquad
\frac{p(\ell)}{d(\ell)^\alpha} \geq \frac{p(\ell')}{d(\ell')^\alpha}
\enspace.
\]
We furthermore assume the noise to play a minor role
(cf.~\cite{Kesselheim2010}). 
\begin{proposition}
\label{prop:physical}
The interference constraints in the physical model with fixed
transmission power can be represented by a weighted conflict graph. If
the power assignment satisfies the above constraints, the resulting
\inductiveindependence{} is at most $\O(\log n)$.
\end{proposition}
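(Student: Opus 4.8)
The plan is to establish two things: first, that the physical model with fixed power can be encoded as an edge-weighted conflict graph in the sense defined in Section~\ref{sec:weighted}, and second, that the resulting graph admits an ordering witnessing $\rho = \O(\log n)$. For the encoding, I would define the edge weight $w(\ell, \ell')$ between two links so that the SINR constraint for receiver $r_i$ is captured by the ``independent set'' condition $\sum_{\ell' \in M} w(\ell', \ell) < 1$. Concretely, since the SINR inequality rearranges to $\sum_{j \neq i} \beta \, p_j / d(s_j, r_i)^\alpha \leq p_i / d(s_i, r_i)^\alpha - \beta \nu$, and we are assuming noise plays a minor role, the natural choice is to set $w(\ell_j, \ell_i)$ proportional to $(\beta \, p_j / d(s_j, r_i)^\alpha) \big/ (p_i / d(s_i, r_i)^\alpha)$, so that the aggregate interference constraint becomes exactly a weighted-independent-set constraint. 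I would verify that with this definition a set of links is jointly schedulable on one channel precisely when it is independent in the weighted graph (up to the noise approximation), which is the content of the ``can be represented'' clause.

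For the bound on $\rho$, the plan is to order links by increasing length $d(\ell)$, i.e.\ $\pi(\ell) < \pi(\ell')$ when $d(\ell) < d(\ell')$ (breaking ties arbitrarily), so that the ``backward neighborhood'' of a link $\ell$ consists of shorter links. Fix a link $\ell = (s, r)$ and an independent set $M$ of shorter links; the goal is to bound $\sum_{\ell' \in M} \bar w(\ell', \ell)$ by $\O(\log n)$. The standard approach here is a \emph{bucketing} argument: partition the links of $M$ according to their distance from $r$ (or from $s$) into geometrically increasing distance classes, say ring $t$ containing links at distance in $[2^t, 2^{t+1})$. Within a single ring, the monotonicity constraints on the power assignment together with the independence of $M$ bound the contribution by a constant — this is the genuinely geometric step, analogous to the packing argument used for disk graphs and the Distance-2 lemma earlier in the paper, where the signal-strength monotonicity $p(\ell)/d(\ell)^\alpha \geq p(\ell')/d(\ell')^\alpha$ for $d(\ell) \leq d(\ell')$ lets me compare the interference each member of $M$ injects into $r$ against the mutual interference they inject into each other (which independence bounds). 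The number of nonempty rings is $\O(\log n)$ because distances in an $n$-point configuration span a polynomially bounded range once one normalizes, so summing the per-ring constant contributions yields $\O(\log n)$.

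I expect the main obstacle to be the per-ring constant bound, specifically showing that within one geometric distance class only $\O(1)$ links of $M$ can each contribute a nonnegligible weight to $\ell$ while remaining pairwise independent. The subtlety is that independence of $M$ controls interference \emph{among} the links of $M$ (each $\ell'' \in M$ tolerates bounded interference from the others), whereas what I need to bound is interference from $M$ \emph{into} the fixed link $\ell$; converting one into the other requires exploiting both monotonicity constraints on the power assignment and the fact that the links in $M$ are all longer than $\ell$. I would carry this out by showing that two links in the same ring that both strongly interfere with $r$ would necessarily be close enough to each other (via the triangle inequality in the metric) that one would violate the other's SINR constraint, contradicting independence — mirroring the $60^\circ$ angle argument of Proposition~\ref{prop:diskgraph} but now phrased through signal strengths rather than disk intersections. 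The symmetrization via $\bar w = w(u,v) + w(v,u)$ needs care, since the two directed weights are governed by different receivers; I would handle each directed term separately and then sum, using the monotonicity assumptions to control the ``backward'' direction $w(\ell, \ell')$ where $\ell$ is the shorter link. Finally, I would remark that the same ordering and argument apply to the power-optimization variant, since only the monotonicity properties of the effective power assignment are used.
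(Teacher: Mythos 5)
Your first paragraph (the encoding of SINR constraints as edge weights proportional to normalized affectance, so that feasibility coincides with weighted independence up to the noise term) matches the paper, which additionally inserts a factor $\nicefrac{1}{1+\varepsilon}$ to convert the ``$\geq$'' of the SINR condition into the strict ``$<$'' of the independent-set definition and caps each weight at $1$. The problem is in the second part: your ordering is reversed, and with it the claimed bound is false. The paper orders links from \emph{long to short}, so that the backward neighborhood of a link $\ell$ consists of \emph{longer} links; it then invokes the lemma from~\cite{Kesselheim2010} stating that for a feasible set $M$ with $d(\ell)\leq d(\ell')$ for all $\ell'\in M$ one has $\sum_{\ell'\in M} a_p(\ell',\ell)=\O(1)$ and $\sum_{\ell'\in M} a_p(\ell,\ell')=\O(\log n)$, which together bound $\sum_{\ell'\in M}\bar w(\ell',\ell)$. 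You order from short to long, so your backward neighborhood consists of \emph{shorter} links, and there the bound fails: with uniform power, place $\Theta(R^2)$ unit-length links with pairwise sender separation a large constant inside a ball of radius $R$ around the receiver $r$ of a link $\ell$ of length $L\gg R$. This set is mutually feasible (the interference sum over a constant-density packing converges for $\alpha>2$), yet each member has weight $\min\{1,\beta L^\alpha/d(s',r)^\alpha\}=1$ toward $\ell$, so the total is polynomial in $n$, not $\O(\log n)$. Your proposed per-ring rescue --- that two links in one ring both strongly interfering with $r$ must violate each other's SINR --- is exactly what breaks: short links have strong received signals and tolerate one another at distance $\Theta(R)$, while each individually overwhelms the weak signal of the long link at $r$. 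The asymmetry between ``interference tolerated by short links'' and ``interference suffered by a long link'' is the whole point of choosing the long-to-short ordering.

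Two further remarks. First, the paper does not reprove the geometric packing bounds at all; it reduces to the affectance lemma of~\cite{Kesselheim2010} for monotone power assignments, so if you intend to give a self-contained argument you are taking on a genuinely harder task than the paper does, and the $\O(\log n)$ there does not come from ``distances span a polynomial range after normalization'' --- that assumption is unjustified in general metrics and is not needed once the ordering is fixed correctly (the $\O(1)$ direction handles the incoming weights, and the $\O(\log n)$ arises from the outgoing affectance of the single short link on the longer feasible links). Second, your closing remark that the same argument covers the power-optimization variant is too quick: there the paper uses different, purely distance-based weights and the results of~\cite{Kesselheim2010a}, and explicitly notes that feasible sets no longer coincide with independent sets, only that each feasible set is an LP solution for some $\rho$ and each integral $\rho=1$ solution is feasible after running the power-control procedure.
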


\begin{proof}
We choose the edges of the conflict graph to have the
following weights. For $\ell = (s, r)$, $\ell' = (s', r')$ we set
\[
w(\ell', \ell) = \min\left\{ 1, \frac{\beta}{1 + \varepsilon} \cdot
  \frac{p(\ell')}{d(s', r)^\alpha} \Bigg/ \left( \frac{p(\ell)}{d(s,
      r)^\alpha} - \frac{\beta}{1 + \varepsilon} \nu \right) \right\}
\enspace,
\]
where
\[
\varepsilon = \frac{\beta}{2} \min_{\ell = (s,r)} \min_{\ell'=(s',r')}
\frac{p(\ell)}{d(s',r)^\alpha} \Bigg/ \frac{p(\ell)}{d(s,r)^\alpha}
\enspace.
\]
By this definition a set $M$ fulfills the SINR constraint iff it
corresponds to an independent set in the edge-weighted graph. The
$\nicefrac{1}{1 + \varepsilon}$ factor is only necessary to get an
exact transformation of the ``$\geq$'' in the SINR condition to the
``$<$'' in the independent set definition. Apart from this factor the
edge weights are equal to the notion of \emph{affectance} $a_p$ in
\cite{Kesselheim2010}, for which we have the following result.
\begin{lemma}[\cite{Kesselheim2010}]
  Let $p$ be a power assignment satisfying Conditions~1 and 2 in
  \cite{Kesselheim2010}

  If $M$ is a set of links that can concurrently transmit and $\ell$
  is link with $d(\ell) \leq d(\ell')$ for all $\ell' \in M$, then
  \[
  \sum_{\ell' \in M} a_p(\ell', \ell) = \O(1) \qquad \text{ and }
  \qquad \sum_{\ell' \in M} a_p(\ell, \ell') = \O(\log n) \enspace.
  \]
\end{lemma}

This immediately yields the edge-weighted graph to have an
\inductiveindependence{} $\rho = \O(\log n)$.
\end{proof}

Interestingly, we can also use our approach if transmission powers are
not given upfront. In this case, our algorithm has to decide about the
assignment of links to channels and which transmission powers to use
for each link. The first part is solved by LP rounding as above. In
the LP we use edge weights ensuring that there is a feasible power
assignment for the computed set of links. The second task of power
assignment can then by done using a power control procedure by
Kesselheim~\cite{Kesselheim2010a}.

Note that, in contrast to the interference models mentioned above, in
this case not all feasible solutions (i.e., feasibly scheduled sets of
links) correspond to independent sets in the weighted graph. However,
for our argument it suffices to observe that each set of feasible
links corresponds to an LP solution for some $\rho$ and that integral
LP solutions with $\rho = 1$ also correspond to feasible sets of
links. Combining these insights with the bounds
in~\cite{Kesselheim2010a} and the ones we proved above, we obtain the
following result.

\begin{theorem}
  \label{theo:physical}
  There is a choice of edge weights such that our algorithm in
  combination with the power control procedure
  in~\cite{Kesselheim2010a} achieves an $\O(\sqrt{k} \log n)$
  approximation in fading metrics and an $\O(\sqrt{k} \log^2 n)$
  approximation in general metrics.
\end{theorem}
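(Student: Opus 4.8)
The plan is to chain together the results already established in the excerpt, using the power control procedure of~\cite{Kesselheim2010a} to handle the case where transmission powers are not fixed in advance. The key structural observation is stated just before the theorem: even though not every feasibly-scheduled set of links corresponds to an independent set in the weighted conflict graph, it \emph{does} correspond to a feasible LP solution for an appropriate $\rho$, and conversely integral LP solutions with $\rho = 1$ yield feasible sets of links. So the strategy is to set up the edge weights so that the LP optimum is at least as large as the optimal feasible social welfare, round it via Algorithm~\ref{alg:weighted-rounding} and Algorithm~\ref{alg:weighted-makefeasible}, and then invoke the power control result to actually realize the chosen links with feasible powers.

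Concretely, I would proceed as follows. First I would fix the edge weights of the conflict graph (scaled by the affectance notion $a_p$ as in Proposition~\ref{prop:physical}) so that every optimal feasible schedule embeds as an LP solution; this guarantees the LP value $b^\ast$ is an upper bound on $\mathrm{OPT}$. Second, I would apply the rounding lemma for weighted graphs, which produces a partly-feasible allocation of expected value at least $\nicefrac{b^\ast}{16\sqrt{k}\rho}$. Third, I would apply Lemma~\ref{lemma:weighted-makefeasible} to convert the partly-feasible allocation into a fully-feasible one, losing a further $\lceil \log n \rceil$ factor, so the feasible allocation has expected value at least
\[
\frac{b^\ast}{16\sqrt{k}\,\rho\,\lceil \log n \rceil} = \Omega\!\left(\frac{\mathrm{OPT}}{\sqrt{k}\,\rho\,\log n}\right).
\]
Finally I would substitute the bound $\rho = \O(\log n)$ from Proposition~\ref{prop:physical}, yielding an $\O(\sqrt{k}\log^2 n)$ approximation for general metrics, and the sharper $\rho = \O(1)$-type affectance bound available in fading metrics, which removes one logarithmic factor and gives $\O(\sqrt{k}\log n)$.

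The delicate part, and the main obstacle, is the interface between the combinatorial allocation and the analytic power-assignment step. Unlike the binary models, where feasibility of a channel's set of links is purely a property of the conflict graph, here the computed set on each channel is merely guaranteed to be ``schedulable'' in the sense that a valid power assignment \emph{exists}; the LP rounding does not itself produce powers. I would therefore need to argue that the weights are chosen so that each channel's output set satisfies the hypotheses of the power control procedure of~\cite{Kesselheim2010a}, and that applying that procedure channel-by-channel simultaneously respects the SINR constraints across all $k$ channels (the channels do not interfere, so this decouples per channel). The distinction between fading and general metrics enters precisely through the power control result: the affectance sum bounds differ, giving $\O(1)$ versus $\O(\log n)$ losses, which is exactly what accounts for the single logarithmic-factor gap between the two claimed approximation guarantees.
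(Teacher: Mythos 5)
Your proposal is correct and follows essentially the same route as the paper: choose edge weights so that every feasible set of links embeds as an LP solution with $\rho=\O(1)$ in fading metrics and $\rho=\O(\log n)$ in general metrics, while independent sets in the weighted graph are schedulable by the power control procedure of~\cite{Kesselheim2010a}, then chain the $\O(\sqrt{k}\rho)$ rounding loss with the $\O(\log n)$ loss of Lemma~\ref{lemma:weighted-makefeasible}. The only slip is the parenthetical suggestion to scale weights by the affectance $a_p$ of Proposition~\ref{prop:physical} --- since powers are not fixed in advance here, the paper instead uses purely distance-based weights $w(\ell,\ell')$ with a suitable normalization $\tau$; but your argument does not actually depend on that choice, only on the two embedding properties you state.
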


\begin{proof}
We define the weighted graph as follows. The set of vertices is again
the set of all links $\mathcal{R}$. The ordering $\pi$ is the ordering
from large to small distances between the sender and its
receiver. Between two links $\ell = (s, r)$ and $\ell'=(s', r')$, we
have the following weight
\[
w(\ell, \ell') = \begin{cases}
  \frac{1}{\tau} \min \left\{ 1, \frac{d(s, r)^\alpha}{d(s, r')^\alpha} \right\} + \frac{1}{\tau} \min \left\{ 1, \frac{d(s, r)^\alpha}{d(s', r)^\alpha} \right\} & \text{if $\pi(\ell) < \pi(\ell')$} \\
  0 & \text{otherwise}
\end{cases} \enspace,
\]
\[
\text{where} \qquad \tau = \frac{1}{2 \cdot 3^\alpha \cdot \left(4
    \beta + 2\right)} \enspace.
\]
Theorem~3 in~\cite{Kesselheim2010a} states that for each independent
set in the weighted graph the power control algorithm calculates a
feasible set of links.

On the other hand Theorem~1 in~\cite{Kesselheim2010a} shows that under
the above edge weights each feasible set of links is also an LP
solution for some $\rho=\O(1)$ in fading metrics. Theorem~7
in~\cite{Kesselheim2010a} shows $\rho=\O(\log n)$ in general metrics.

In conclusion, this implies that by applying our rounding algorithm to
the LP using above defined weights we get a solution, for which we can
apply the power assignment of~\cite{Kesselheim2010a} to obtain a
feasible set of links. The resulting allocation is an $\O(\sqrt{k}
\log n)$ approximation for fading metrics and an $\O(\sqrt{k} \log^2
n)$ approximation in general metrics.
\end{proof}
\section{Mechanism Design}
\label{sec:mechanism}
In this section we show how to apply the framework proposed by Lavi
and Swamy~\cite{Lavi2005} to obtain a truthful mechanism for the
problem, in which the valuations for the allocations are private
information. We only highlight the main ideas of this technique and
the most important observations that allow the use for our problem.

The main idea of the approach is to decompose an optimal LP solution
$x^\ast$ into a set of polynomially many integral solutions with the
following property. For each integral solution we determine a
probability, and the expected social welfare of a randomly chosen solution
according to the probabilities is exactly $b^\ast / \alpha$, where in
our case $\alpha = 8 \cdot \sqrt{k} \cdot \rho$. Given such a
decomposition, we can use scaled VCG payments to implement a
randomized mechanism that is truthful in expectation. For an
accessible presentation of the complete technique,
see~\cite[Chapter 12]{Nisan07} or~\cite{Lavi2005}.

In particular, for simplicity let us first consider only a constant
number of channels; the adjustment to arbitrary many channels is
treated below. We ask the vertices to obtain all valuations for all
channel bundles and solve the corresponding LP (interference
information is assumed to be publicly available). Note that at this
point we are given the optimal solution to an \emph{infeasible} LP. We
set up a decomposition LP with exponentially many variables -- one for
each \emph{feasible} integral solution -- that represent our desired
probabilities. This LP has polynomially many constraints but exponentially
many variables. We can construct the dual with
polynomially many variables and exponentially many constraints. The
variables can be interpreted as valuations in an adjusted
combinatorial auction problem. If this problem has an algorithm that
verifies an integrality gap, we obtain a separation oracle and can
solve the dual decomposition LP in polynomial time. In particular, it
allows us to construct an equivalent LP with a polynomial number of
constraints, i.\,e., the ones corresponding to the solutions obtained
by our algorithm. For this polynomial-sized dual we construct the
primal and determine the polynomially many probabilities of the
solutions found by our algorithm, which completes the decomposition.

It remains to verify that our algorithms provide integral solutions
within the desired integrality gap of $\alpha$ for the adjusted
combinatorial auction problems using dual variables as valuations. We
note here that our algorithms bound the integrality gap of
LP~\eqref{eq:lp} and~\eqref{eq:lp-weighted}, and they can be
derandomized using the technique of pairwise independence. In this
way, given an optimal LP solution $x^\ast$ we can obtain an integral
solution of value at least $b^\ast/\alpha$. Note that our LP describe,
in fact, relaxations of the combinatorial auction problem with
conflict graphs, because Conditions~\eqref{eq:lp:niceproperty}
and~\eqref{eq:lp-weighted:niceproperty} allow each vertex to have
multiple neighbors on the same channel. An arbitrary integral solution
to the LP might thus be infeasible for the original problem. This is
even more severe in the case of the physical model with power control,
where even the interpretation of edge weights is significantly
disconnected from the actual interference that is received. However,
our algorithms produce feasible integral solutions with the desired
gap to the infeasible fractional optimum. Thus, they also prove the
gap for a potential fractional optimum to the LP describing the (more
constrained) exact combinatorial auction problem with conflict graphs
in the respective cases. The remaining arguments can be adapted
from~\cite{Lavi2005} almost without adjustment.

In case of an arbitrary number of channels, we can use demand oracles
to solve the LPs. This results in only a polynomial
number of (non-zero) variables for the LP and for the dual of the
decomposition LP. Note that the procedure to separate the dual of the
decomposition LP does not require demand oracles. In fact, the
complete decomposition procedure can be carried out without accessing
the original bidder valuations.

\section{Asymmetric Channels}
\label{sec:asymmetric}
Up to now, channels were symmetric in terms of interference, which
means the same interference model is applied to each channel. In a
more general setting, for each of the $k$ channels a different edge
set $E_j$ resp. a different edge-weight function $w_j$ for the
interference graph is given.

In this case, we have an edge weight function $\bar{w}_j$ for each
channel $j \in [k]$. The above LP relaxation be easily adapted by
exchanging $\bar{w}$ by $\bar{w}_j$ in the
constraints~\eqref{eq:lp:niceproperty}. In contrast, the analysis of
the rounding algorithms internally depends on the assumption of
symmetric channels. In particular, the proof of
Lemma~\ref{lemma:unweighted-analysis:lostinconflictresolution} uses
the symmetry.

However, when exchanging the probability for a vertex $v$ to choose
set $T$ by $\nicefrac{x_{v, T}^{(l)}}{2 k \rho}$
resp. $\nicefrac{x_{v, T}^{(l)}}{4 k \rho}$, the proof of
Lemma~\ref{lemma:unweighted-analysis:lostinconflictresolution} can be
carried out the same way without using the symmetry.

Hence, for the asymmetric case, we lose a factor of $\O(k \cdot \rho)$
resp. $\O(k \cdot \rho \cdot \log n)$ in the LP rounding step. This
represents our approximation ratio. The result may seem like a trivial
generalization of the $k=1$ case. However, this is not true as
multiple graphs make the problem much harder. We can justify the
approximation factor by a hardness bound.

\begin{theorem}
  For each $\rho$, $k$ with $\rho \cdot k = \O(\log n)$ there is no
  $\nicefrac{\rho \cdot k}{2^{\O(\sqrt{\log(\rho \cdot k)})}}$
  approximation algorithm for asymmetric channels unless \classP =
  \classNP.
\end{theorem}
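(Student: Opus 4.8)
The plan is to reduce from the approximability of Independent Set in bounded-degree graphs, exactly as in the $k=1$ theorem, but to distribute the degree across the $k$ asymmetric channels so that the product $\rho \cdot k$ plays the role that $\rho$ played before. Concretely, I would set $D = \rho \cdot k$ and start from a graph $G = (V,E)$ of maximum degree $D$ for which Trevisan's result~\cite{Trevisan2001} rules out a $\nicefrac{D}{2^{\O(\sqrt{\log D})}}$-approximation for Independent Set in the regime $D = \O(\log n)$. This is precisely the regime guaranteed by the hypothesis $\rho \cdot k = \O(\log n)$, and the reduction will keep the vertex set (hence $n$) unchanged.

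The key construction step is a balanced edge decomposition of $G$. I would partition $E$ into $k$ spanning subgraphs $G_1 = (V, E_1), \ldots, G_k = (V, E_k)$ such that every vertex has degree at most $\lceil \deg_G(v)/k \rceil \leq \lceil D/k \rceil = \rho$ in each $G_j$. Such an equitable decomposition exists by repeated Euler-tour splitting (equivalently, by grouping the color classes of a proper edge coloring in a de Werra-type balanced fashion). Since each $G_j$ then has maximum degree at most $\rho$, its \inductiveindependence{} is at most $\rho$ with respect to any ordering, as already observed for the $k=1$ case. I then set up the asymmetric auction in which channel $j$ carries the conflict graph $G_j$, and give each vertex the valuation $b_{v,[k]} = 1$ and $b_{v,T} = 0$ for every $T \neq [k]$.

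It remains to argue that the optimum of this auction equals the maximum independent set of $G$. A vertex contributes to the social welfare only when it receives all $k$ channels, so in an optimal allocation every non-contributing vertex may be assigned the empty set without loss. Writing $I = \{v \mid S(v) = [k]\}$, feasibility on channel $j$ forces $I$ to be independent in $G_j$; ranging over all $j$, the set $I$ is independent in $G = \bigcup_j G_j$, and conversely any independent set of $G$ yields a feasible allocation of the same value. Hence the optimum of the auction is $\alpha(G)$, and any feasible $c$-approximate allocation directly exhibits an independent set of $G$ of size at least $\alpha(G)/c$. Taking $c = \nicefrac{\rho k}{2^{\O(\sqrt{\log(\rho k)})}} = \nicefrac{D}{2^{\O(\sqrt{\log D})}}$ then contradicts Trevisan's bound unless $\classP = \classNP$.

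The main obstacle I anticipate is the edge-decomposition step: I must verify that the chosen $k$-way split genuinely drives the per-channel maximum degree, and therefore the per-channel \inductiveindependence{}, down to $\rho$ rather than merely $\O(\rho)$, since the statement is tight in the product $\rho \cdot k$ up to the $2^{\O(\sqrt{\log(\rho k)})}$ slack. The balanced edge decomposition delivers exactly this, and I would confirm that it introduces no slack in the parameter regime $\rho k = \O(\log n)$ needed to invoke the bounded-degree hardness.
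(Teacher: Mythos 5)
Your proposal is correct and follows the same overall route as the paper: reduce from Independent Set in graphs of maximum degree $D=\rho k=\O(\log n)$, split the edges of $G$ among the $k$ channels, give each vertex value $1$ for the full bundle $[k]$ and $0$ for every other set, and observe that feasible allocations of value $b$ correspond exactly to independent sets of size $b$ in $G=\bigcup_j G_j$, so Trevisan's hardness transfers with approximation factor $\nicefrac{D}{2^{\O(\sqrt{\log D})}}$. The one place where you diverge is the edge-splitting step. You insist on an equitable decomposition in which \emph{every} vertex has degree at most $\lceil \deg_G(v)/k\rceil\leq\rho$ in each $G_j$, which you obtain from de Werra-type balanced edge colorings or repeated Euler splitting. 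The paper gets away with something strictly weaker and entirely elementary: fix an arbitrary ordering $v_1,\ldots,v_n$ of $V$, and for each vertex $v_i$ distribute only its \emph{backward} edges (those to $v_j$ with $j<i$) among the $k$ edge sets so that each set receives at most $\rho$ of them, which is trivially possible by greedy bucketing since there are at most $D=\rho k$ such edges. This suffices because the \inductiveindependence{} of each $G_j$ is already bounded by the maximum backward degree under the chosen ordering; the forward degrees are irrelevant. Consequently the ``main obstacle'' you flag --- verifying that the split drives the per-channel maximum degree down to exactly $\rho$ --- dissolves once you realize you only need to control backward degrees, and the heavier balanced-decomposition machinery buys nothing extra here. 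Both versions are valid proofs of the stated theorem.
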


\begin{proof}
  Again, such an algorithm could be used to approximate the
  independent set problem in bounded-degree graphs. Given a graph
  $G=(V,E)$ with maximum degree $d$, we construct $k$ graphs $G_1 =
  (V, E_1)$, \ldots, $G_k = (V, E_k)$ each having an
  \inductiveindependence{} of $\rho = \nicefrac{d}{k}$. For simplicity
  of notation, we assume this is an integer.

  Let $\{v_1, \ldots, v_n\}$ be an arbitrary ordering of the
  vertices. We now distribute the edges from $E$ to the edge sets
  $E_1$, \ldots, $E_k$. For a vertex $v_i$ the incident edges to
  vertices $v_j$ of lower index are distributed such that each edge
  set gets at most $\rho$ such edges. Since the maximum vertex degree
  is $d$ this is always possible. The valuations for the vertices are
  chosen such that for all vertices $b_{v, T}$ is $1$ only for $T=[k]$
  and $0$ otherwise.

  By this construction allocations of valuation $b$ exactly correspond
  to independent sets of size $b$. Thus, such an approximation
  algorithm cannot exist unless \classP = \classNP.
\end{proof}

As we see, for asymmetric channels our algorithms are close to optimal
without making further assumptions about the interference model.

\section{Open Problems}
\label{sec:open}
In this paper we present a general framework for secondary spectrum
auctions that works with a large number of interference models. Our
approach can easily be extended to even more models by proving bounds
on the \inductiveindependence{} in the associated graphs. To improve
the results in this paper, it would, e.g., be interesting to know if
for the physical model it also holds that $\rho = \O(1)$ in general
metrics or for distance-based power assignments.

For obtaining a truthful mechanism we use decomposition and rounding
of LP solutions, and we heavily rely on the ellipsoid method. It is an
interesting question if this could be avoided to make the algorithm
more applicable in practice.

\thispagestyle{empty}
\bibliographystyle{plain}
\bibliography{bibliography}

\end{document}